\begin{document}
\renewcommand{\algorithmicrequire}{\textbf{Input:}}
\renewcommand{\algorithmicensure}{\textbf{Output:}}
\newcommand{\PP}{\mathbb{P}}
\newcommand{\HH}{\mathbf{H}}
\newcommand{\innerp}[1]{\langle {#1} \rangle}
\newcommand{\norm}[1]{\|{#1}\|_2}
\newcommand{\abs}[1]{\lvert#1\rvert}
\newcommand{\absinn}[1]{\vert\langle {#1} \rangle\rvert}
\newcommand{\argmin}[1]{\mathop{\rm argmin}\limits_{#1}}
\newcommand{\argmax}[1]{\mathop{\rm argmax}\limits_{#1}}
\newcommand{\ds}{\displaystyle}
\newcommand{\wt}{\widetilde}
\newcommand{\ra}{{\rightarrow}}
\newcommand{\lra}{{\longrightarrow}}
\newcommand{\eproof}{\hfill\rule{2.2mm}{3.0mm}}
\newcommand{\esubproof}{\hfill$\Box$}
\newcommand{\Proof}{\noindent {\bf Proof.~~}}
\newcommand{\D}{{\mathcal D}}
\newcommand{\TT}{{\mathcal T}}
\renewcommand{\SS}{{\mathcal S}}
\newcommand{\BS}{{\mathbb S}}
\newcommand{\Prob}{{\rm Prob}}
\newcommand{\R}{{\mathbb R}}
\newcommand{\Z}{{\mathbb Z}}
\newcommand{\T}{{\mathbb T}}
\newcommand{\C}{{\mathbb C}}
\newcommand{\CG}{{\mathcal G}}
\newcommand{\Q}{{\mathbb Q}}
\newcommand{\N}{{\mathbb N}}
\newcommand{\ep}{\varepsilon}
\newcommand{\wmod}[1]{\mbox{~(mod~$#1$)}}
\renewcommand{\eqref}[1]{(\ref{#1})}
\newcommand{\inner}[1]{\langle #1 \rangle}
\newcommand{\shsp}{\hspace{1em}}
\newcommand{\mhsp}{\hspace{2em}}
\newcommand{\FT}[1]{\widehat{#1}}
\newcommand{\conj}[1]{\overline{#1}}
\newcommand{\ber}{\nu_{\lambda}}
\newcommand{\bern}{\nu_{\lambda_n}}
\newcommand{\biasber}{\nu_{\lambda, p}}
\newcommand{\bequiv}{\sim_\lambda}
\newcommand{\bnequiv}{\sim_{\lambda_n}}

\newcommand{\E}{{\mathbb E}}
\newcommand{\B}{{\mathcal B}}
\newcommand{\MC}{{\mathcal C}}
\newcommand{\ML}{{\mathcal L}}
\newcommand{\MLP}{{\mathcal{PL}}}
\newcommand{\MLS}{{\mathcal{PS}}}
\newcommand{\NN}{{\mathcal N}}
\renewcommand{\i}{{\mathbf i}}
\newcommand{\rank}{{\rm rank}}
\newcommand{\supp}{{\rm supp}}
\newcommand{\diag}{{\rm diag}}
\renewcommand{\j}{{\mathbf j}}
\newcommand{\vx}{{\mathbf x}}
\newcommand{\vy}{{\mathbf y}}
\newcommand{\vO}{{\mathbf 0}}
\newcommand{\vo}{{\mathbf o}}
\newcommand{\va}{{\mathbf a}}
\newcommand{\vb}{{\mathbf b}}
\newcommand{\vd}{{\mathbf d}}
\newcommand{\vv}{{\mathbf v}}
\newcommand{\vu}{{\mathbf u}}
\newcommand{\vf}{{\mathbf f}}
\newcommand{\vg}{{\mathbf g}}
\newcommand{\ve}{{\mathbf e}}
\renewcommand{\H}{{\mathbb H}}
\newcommand{\qHH}{\widetilde{\HH}}
\newcommand{\qC}{\widetilde{\C}}
\newcommand{\ul}[1]{\underline{#1}}
\newcommand{\MM}{\mathbf M}
\newcommand{\XX}{\mathbf X}
\newcommand{\x}{\mathbf x}
\newcommand{\ba}{\mathbf a}
\newcommand{\F}{{\mathcal F}}
\newcommand{\G}{{\mathcal G}}
\newcommand{\CR}{{\mathcal R}}
\newcommand{\sn}{{\mathbb S}^n}
\newcommand{\vn}{{\mathbf v}^n}
\newcommand{\un}{{\mathbf u}^n}
\newcommand{\uD}{\underline{D}}
\newcommand{\dd}{{\mathrm d}}
\newcommand{\s}{{\mathbf s}}
\newcommand{\zz}{^{\top}}
\renewcommand{\theequation}{\thesection.\arabic{equation}}
\renewcommand{\thefigure}{\arabic{figure}}
\newtheorem{definition}{Definition}[section]
\newtheorem{corollary}{Corollary}[section]
\newtheorem{theorem}{Theorem}[section]
\newtheorem{example}{Example}[section]
\newtheorem{lemma}{Lemma}[section]
\newtheorem{remark}{Remark}[section]
\newtheorem{notation}{Notation}[section]
\newtheorem{prop}{Proposition}[section]

\title{ Phaseless recovery using Gauss-Newton method \thanks{Research of Zhiqiang Xu was supported  by NSFC grant ( 11422113,  91630203, 11331012) and by National Basic Research Program of China (973 Program 2015CB856000).}\thanks{ B.~Gao and Z.~Xu are from Inst. Comp. Math., Academy of Mathematics and Systems Science,
Chinese Academy of Sciences, Beijing, China. Email: gaobing@lsec.cc.ac.cn, xuzq@lsec.cc.ac.cn} }
\author{ Bing Gao,\,\, Zhiqiang Xu}
\date{\today}
\markboth{Date of current version June~2017}%
{Shell \MakeLowercase{\textit{et al.}}: Bare Demo of IEEEtran.cls for IEEE Journals}
\maketitle

\begin{abstract}
In this paper, we propose a Gauss-Newton algorithm to recover a $ n $-dimensional signal from its phaseless measurements. Our algorithm has two stages: in the first stage, our algorithm gets a good initialization by calculating the eigenvector corresponding to the largest eigenvalue of a Hermitian matrix; in the second stage, our algorithm solves an optimization problem iteratively using the Gauss-Newton method. Our initialization method makes full use of all measurements and provides a good initial guess as long as the number of random measurements is  $O(n)$. For real-valued signals, we prove that a re-sampled version of Gauss-Newton iterations can converge to the global optimal solution quadratically with $ O(n\log n) $ random measurements. Numerical experiments show that Gauss-Newton method has better empirical performance over the other algorithms, such as Wirtinger flow algorithm and  alternating minimization algorithm, etc.
\end{abstract}
\begin{IEEEkeywords}
phaseless recovery, phase retrieval,  Gauss-Newton method, quadratic convergence.
\end{IEEEkeywords}

\section{Introduction}

\subsection{Phaseless Recovery Problem}
\IEEEPARstart{R}{ecovering} a signal from the magnitude of measurements, known as {\em phaseless recovery problem}, frequently occurred in science and engineering \cite{application1,application2,application3,application4}.
Suppose that $\{a_1,\ldots,a_m\}\subset \H^n$ is a frame, i.e., ${\rm span}\{a_1,\ldots,a_m\}=\H^n$ $(\H=\C$ or $\R)$ and $ y_j =\abs{\innerp{a_j,z}}^2,\, j=1,\ldots,m$ where $  z\in \H^n $.
The phaseless recovery problem can be formulated in the form of solving quadratic equations:
\begin{equation}\label{eq:phase}
y_j=\abs{\innerp{a_j,x}}^2, j=1,\ldots,m,
\end{equation}
where $a_j\in \H^n$ are the sensing vectors.
Our aim is to recover $ z $ (up to a global unimodular constant) by solving (\ref{eq:phase}).

Recently phaseless recovery problem attracts much attention \cite{BCE06, BCMN, bodmann} and many algorithms are developed for solving it.
A well-known method is the error reduction algorithm \cite{alt1,alt2}.
Despite the algorithm is used in many applications, there are few theoretical guarantees for the global convergence of it.
In \cite{PN13},  a re-sampled version of the error reduction  algorithm, the Altmin Phase algorithm, is introduced with proving that the algorithm geometrically converges to the true signal up to an accuracy of $\epsilon$ provided the measurement matrix $A:=[a_1,\ldots,a_m]^*\in \C^{m\times n}$ is Gaussian random matrix with $m=O(n\log^3n\log\frac{1}{\epsilon})$. In fact, to attain the accuracy of $\epsilon$, the algorithm needs $O(\log\frac{1}{\epsilon})$ iterations and different measurements are employed in each iteration of the algorithm. Wirtinger flow (WF) method was first introduced to solve the phaseless recovery problem in \cite{WF}. WF method combines a good initial guess, which is obtained by spectral method, and a series of updates that refine the initial estimate by  a deformation of the gradient descent method. It is proved that WF method converges to an exact solution on a linear rate from $O(n\log n)$ Gaussian random measurements  \cite{WF}. In fact, it is shown in \cite{WF}  that
\[
{\rm dist}(x_{k+1}-z)\,\,\leq\,\, \rho\cdot {\rm dist} (x_k-z),
\]
where $x_k$ is the output of the $k$-th iteration of WF method, $z$ is the true signal, $0<\rho<1$ is a constant and the definition of ${\rm dist}(\cdot)$ is given in Section \ref{notations}.  The truncated WF method is introduced in  \cite{TWF}, which improves the performance of WF method with showing that $O(n)$ Gaussian random measurements are enough to attain  the linear convergence rate. Recently another two stage iterative algorithm, the truncated amplitude flow (TAF), was proposed by Wang, Giannakis and Eldar in \cite{null_initial2}. The TAF uses null initialization method to obtain an initial estimate and refines it by successive updates of truncated generalized gradient iterations. It is  proved that TAF can geometrically converge to the exact signal with $O(n)$ measurements \cite{null_initial2}.
Despite iterative algorithms to solve phaseless recovery problem, a recent approach   is to recast phaseless recovery as a semi-definite programming (SDP), such as PhaseLift \cite{lift1,lift2,lift3}. PhaseLift is to lift a vector problem to a rank-1 matrix one and then one can recover the rank-1 matrix by minimizing the trace of matrices.  Though PhaseLift can provide the exact solution using $O(n)$ measurements,  the computational cost is large when the dimension of the signal is high.

In many applications, the signals to be reconstructed are known to be sparse in advance, i.e., most of the elements are equal to zero. Thus it is natural to develop algorithms to recover sparse signals from the magnitude of measurements, which is also known as {\em sparse phaseless recovery} or {\em sparse phase retrieval}. The $\ell_1$ model for the recovery of sparse signals from the magnitude of measurements is studied in \cite{sparse1,sparse2,sparse3}.  A greedy algorithm, GESPAR, for solving sparse phase retrieval is presented in \cite{gespar}. The core step of the method is to  use the damped Gauss-Newton method to solve a non-linear least square problem. They choose the step size by backtracking and prove that damped Gauss-Newton method converges  to a stationary point. In \cite{TWF} and \cite{Caisp}, the authors investigate the performance of modified WF method for the recovery of real-valued sparse signals from phaseless measurements.

\subsection{Our contribution}
The aim of this paper is twofold. We first present  an alternative    initial guess
which is the eigenvector corresponding to the largest eigenvalue of
\begin{small}
\[
\frac{1}{m}\sum_{j=1}^{m}\left(\frac{1}{2}-\exp\left(\frac{-y_j}{\sum_r y_r/m}\right)\right)a_ja_j^*.
\]
\end{small}
Compared with the one obtained by the spectral method \cite{WF}, the new initial guess can reach accuracy with $O(n)$ Gaussian random measurements while the spectral method requires $O(n\log n)$. The numerical experiments also show that our initialization method has a better performance over the other previous methods.
Our second aim  is to set up a new iterative algorithm for solving phaseless recovery problem. In the algorithm, starting with our initial guess,  we  refine the initial estimation by iteratively applying an update rule, which comes from a Gauss-Newton iteration.  Thus for the convenience of description, we name this algorithm as Gauss-Newton algorithm.
Under the assumption of $z\in \R^n$  and $A\in \C^{m\times n}$ being  Gaussian random matrix, we investigate the performance of the Gauss-Newton algorithm with showing that a re-sampled version of it can quadratically converge to the true signal up to a global sign, i.e.,
\[
{\rm dist}(x_{k+1},z)\leq \beta\cdot  ({\rm dist}(x_k,z))^2,
\]
where $x_k$ is the output of the $k$-th iteration and $\beta$ is a constant. Hence, 
to reach the accuracy  $\epsilon$, re-sampled Gauss-Newton method needs $ O(\log\log \frac{1}{\epsilon})$ iterations, which has an  improvement over the Altmin Phase algorithm. Since many signals from real world are real, the assumption of $z$ being real is reasonable. For the case where signals are complex, we derive a revised Gauss-Newton method.

\subsection{Notations}\label{notations}
Throughout the paper, we reserve $ C $, $ c $ and $ \gamma $, and their indexed versions to denote positive constants. Their value vary with the context. We use $ z \in \H^n $ to denote the target signal.
When no subscript is used,  $ \|\cdot\| $ denotes the Euclidian norm, i.e., $ \|\cdot\|= \|\cdot\|_2$.
We use the Gaussian random vectors $ a_j\in\H^n, \, j=1,\ldots, m $ as the sampling vectors and obtain $ y_j = |\langle a_j, z\rangle|^2,\, j=1,\ldots,m $. Here we say the sampling vectors are the Gaussian random measurements if
$ a_j\in\C^n, \, j=1,\ldots, m $ are i.i.d. $ \mathcal{N}(0,I/2) +i \mathcal{N}(0,I/2) $ random variables or $ a_j\in\R^n, \, j=1,\ldots, m $ are i.i.d. $ \mathcal{N}(0,I) $ random variables.  Denote $ x_k $ as the output of the $ k $-th iteration and $ S_k $ as the line segment between $ x_k $ and $ z $, i.e.,
\[
S_k\,\,:=\,\, \{t z+(1-t)x_k: 0\leq t\leq 1\}.
\]
As the problem setup naturally leads to ambiguous solutions,  we define \[
\{ cz\in \C^n: |c|=1\}
\]
as the solution set.
Then we define
$$
\text{dist}(x,z)=\left\{
\begin{aligned}
& \min_{\phi\in[0,2\pi)}\|z-e^{i\phi}x\|  &\quad \H=\C, \\
& \min\{\|z-x\|,\|z+x\|\}  &\quad \H=\R.
\end{aligned}
\right.
$$
as the distance between $x\in \H^n$ and the solution set.
\subsection{Organization}
The rest of this paper is organized as follows. In Section \ref{sec_initial}, we introduce a new initialization method and prove that it can provide a good initial guess by only $O(n)$ Gaussian random measurements. The Gauss-Newton algorithm for phaseless recovery problem is discussed in Section \ref{sec_gn}. Under the assumption of signals being real and the measurement matrix $A\in \C^{m\times n}$ being complex Gaussian random matrix,  we prove that a re-sampled version of this algorithm can achieve quadratic convergence. Some numerical experiments are given in Section \ref{sec_experiments} to illustrate the practical efficiency of the Gauss-Newton algorithm. At last, most of the detailed proofs are given in the Appendix.

\section{Initialization}\label{sec_initial}

\subsection{Initialization method}
For non-convex problem (\ref{eq:phase}),  proper initial criteria is essential to avoid the iterative algorithm trapping in a local minimum.
So the first step of Gauss-Newton method is to choose an initial estimation. Before giving our initialization method, we first  review several other  methods.

Spectral initialization method \cite{alt1,alt2,WF} estimates the initial guess $ z_0 $  as the eigenvector corresponding to the largest eigenvalue of $\frac{1}{m}\sum_{j=1}^{m}y_j a_ja_j^* $
with norm \begin{small}
	$  \sqrt{\sum_{j=1}^{m}y_j/m} $.
\end{small} In \cite{WF}, Cand\`{e}s, Li and Soltanolkotabi prove that when $a_j, \,j=1,\ldots,m$ are Gaussian random measurements with $m\geq C_0n\log n$,
$ {\rm dist}(z_0,z)\leq 1/8\|z\| $
holds with probability at least $1-10\exp(-\gamma n)-8/n^2$. To reduce the number of observations, a modified spectral method is introduced in \cite{TWF}, which precludes $ y_j $ with large magnitudes. Particularly, they select the initial value as the eigenvector corresponding to the largest eigenvalue  of $ \frac{1}{m}\sum_{j=1}^{m}y_j a_ja_j^*I_{\{|y_j|\leq \beta_y\lambda^2\}} $,
where $ \beta_y $ is an appropriate truncation criteria and \begin{small}
	$ \lambda^2=\sum_jy_j/m $.
\end{small} This method only requires the number of measurements $m\geq Cn$ with a sufficient large constant $ C $.
The null initialization method is introduced by Chen, Fannjiang and Liu in \cite{null_initial1}.
This method builds on the orthogonality characteristics of high-dimensional random vectors, and
choose the eigenvector corresponding to the largest eigenvalue of $ \frac{1}{|I|}\sum_{j\in I} a_j a_j^* $
as the initial guess, where $ I $ is an index set selected by the $ |I| $ largest magnitudes of $ \frac{|\langle a_j, z\rangle|^2}{\|a_j\|^2\|z\|^2} $, $ j=1,\ldots,m $. When the number of measurements is on the order of $ n $, null initialization method can guarantee a good precision.
More details can be found in \cite{null_initial1}, \cite{null_initial2}.
To state conveniently,  we name the first method as SI (Spectral Initialization), the second method as TSI (Truncated Spectral Initialization) and the third method as NI (Null Initialization).

Next we introduce a new  method for initialization, which is stated in Algorithm \ref{initialization1}.
In fact, the initial guess is chosen as the eigenvector corresponding to the largest eigenvalue of the  Hermitian matrix
\begin{small}
\[
Y:=\frac{1}{m}\sum_{j=1}^{m}\left(\frac{1}{2}-\exp\big(-\frac{y_j}{\lambda^2}\big)\right)a_ja_j^*
\]
\end{small}
and normalized by $\lambda:= \sqrt{\frac{1}{m}\sum_{j=1}^{m}y_j} $. We next briefly introduce the reason why we choose the matrix $Y$.
When $ a_j\in\C^n $, $ j=1,\ldots,m $ are the Gaussian random measurements, we have
\begin{small}
	\[
	\E(Y_1) = \frac{zz^*}{4\|z\|^2},
	\]
\end{small}
where
\begin{small}
	\[
	Y_1:=\frac{1}{m}\sum_{j=1}^{m}\left(\frac{1}{2}-\exp\big(-\frac{y_j}{\|z\|^2}\big)\right)a_ja_j^*.
	\]
\end{small}
Noting that $ \lambda^2 $ is a good approximation to $ \|z\|^2 $. So we choose $ Y $ as an approximation to $ \dfrac{zz^*}{4\|z\|^2} $, whose eigenvector of the largest eigenvalue is of the form $ cz $ where  $ c $ is a constant.
Meanwhile, the eigenvector associated with the largest eigenvalue  of $ Y $ can be efficiently calculated by the power method (see details in \cite{WF}).

The new method makes full use of every observation and can obtain an alternative  initial value by nearly optimal number of measurements (see Theorem \ref{initial1_1}). Beyond theoretical results, numerical experiments also show that this method has better performance than that of SI, TSI and NI (see Example \ref{initial_exam}).
\begin{algorithm}[H]
	\caption{Initialization}\label{initialization1}
	\begin{algorithmic}[H]
		\Require
		Observations $ {y}\in\mathbb{R}^m $.\\
		Set
		$$
		\lambda^2=\frac{\sum_jy_j}{m}.
		$$
		Set $ x_0 $, normalized to $ \|x_0\|_2=\lambda $, to be the eigenvector corresponding to the largest eigenvalue of
		$$
		 Y=\frac{1}{m}\sum_{j=1}^{m}\left(\frac{1}{2}-\exp\big(-y_j/\lambda^2\big)\right)a_ja_j^*.
		$$
		\Ensure
		Initial guess $ x_0 $.
	\end{algorithmic}
\end{algorithm}

\subsection{The performance of Algorithm \ref{initialization1}}
The following theorem provides theoretical analysis of Algorithm \ref{initialization1}. Here we suppose $ z\in\C^n $, $a_j\in\C^n$ and prove that the initial guess $ x_0 $ is not far from $ cz, |c|=1 $.
\begin{theorem}\label{initial1_1}
	Suppose that $a_j\in \C^n, j=1,\ldots,m$ are Gaussian random measurements, $z\in \C^n$ and $x_0$ is the output of
	Algorithm \ref{initialization1}.  For any $ \theta>0 $, there exists a constant $C_\theta$ such that when $ m\geq C_\theta n$,
	\begin{equation}\label{eq:inidis}
	\text{\rm dist}(x_0, z)\,\,\leq\,\, \sqrt{3\theta}\|z\|
	\end{equation}
	holds with probability at least $ 1-4\exp(-c_\theta n) $, where $ c_\theta>0 $.
\end{theorem}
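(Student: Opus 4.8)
The plan is to show that, on an event of probability $1-4\exp(-c_\theta n)$, the Hermitian matrix $Y$ lies within operator norm $O(\theta)$ of $\frac{zz^*}{4\|z\|^2}$, and then to turn this spectral proximity into the eigenvector estimate \eqref{eq:inidis} by a $\sin$-$\theta$ argument. First I would reduce to a normalized setting. Rescaling $z\mapsto sz$ multiplies every $y_j$ and $\lambda^2$ by $|s|^2$, hence leaves $y_j/\lambda^2$ and $Y$ unchanged, while it multiplies both ${\rm dist}(x_0,z)$ and $\|z\|$ by $|s|$; and the law of the Gaussian measurements is unitarily invariant. So I may assume $\|z\|=1$ and $z=e_1$. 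Writing $a_j=(a_{j,1},\tilde a_j)$ with $\tilde a_j\in\C^{n-1}$ independent of $a_{j,1}$ and $X_j:=y_j=|a_{j,1}|^2\sim{\rm Exp}(1)$, the weight $g(t):=\tfrac12-e^{-t}$ obeys $|g|\le\tfrac12$ and $|g'|\le 1$, and a one-line computation gives $\E[g(X)X]=\tfrac14$ and $\E[g(X)]=0$; this is precisely the identity $\E(Y_1)=\tfrac14 e_1e_1^*$ recorded before the theorem.

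Next I would collect the two concentration facts. (i) Since $\lambda^2=\tfrac1m\sum_j X_j$ is an average of i.i.d.\ ${\rm Exp}(1)$ variables, $|\lambda^2-1|\le\epsilon$ with probability $\ge 1-2\exp(-c\epsilon^2 m)$, which is $\ge 1-2\exp(-c_\theta n)$ once $m\ge C_\theta n$, for any prescribed $\epsilon\le\theta$. (ii) For a fixed $\tau$ near $1$, put $Y(\tau):=\tfrac1m\sum_j g(X_j/\tau)a_ja_j^*$, so that $Y=Y(\lambda^2)$. Because the weight is \emph{bounded}, each summand $g(X_j/\tau)a_ja_j^*$ has operator norm at most $\tfrac12\|a_j\|^2$, which is sub-exponential with parameter $O(n)$ --- no heavier than a single $a_ja_j^*$ --- so there is no need to truncate the weights, and the standard concentration bound for sums of independent rank-one sub-exponential matrices (the one used for the spectral initializers of \cite{WF,TWF}) gives $\|Y(\tau)-\E Y(\tau)\|\le\theta/4$ with probability $\ge 1-2\exp(-c_\theta n)$ as soon as $m\ge C_\theta n$; the first-coordinate split makes the weight $g(X_j/\tau)$ independent of $\tilde a_j\tilde a_j^*$, which is convenient when bounding the blocks. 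Since $\E Y(\tau)={\rm diag}\!\big(\E[g(X/\tau)X],\E[g(X/\tau)],\dots,\E[g(X/\tau)]\big)$ is Lipschitz in $\tau$ near $1$ with $\E Y(1)=\tfrac14e_1e_1^*$, it follows that $\|\E Y(\lambda^2)-\tfrac14e_1e_1^*\|\le C\epsilon$ on the event of (i). To pass from a fixed $\tau$ to $\tau=\lambda^2$ one takes a net of ${\rm poly}(n)$ points in $[1-\epsilon,1+\epsilon]$ and uses $|g(X_j/\tau)-g(X_j/\tau')|\le X_j|\tau-\tau'|$, paying only a $\log n$ factor absorbed into $C_\theta$. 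Combining, $\|Y-\tfrac14e_1e_1^*\|\le\eta$ with $\eta$ a small multiple of $\theta$ on an event of probability $\ge 1-4\exp(-c_\theta n)$.

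Finally I would convert the spectral bound into \eqref{eq:inidis}. Let $v$ be the unit top eigenvector of $Y$ and $\mu_1$ its top eigenvalue; then $|\mu_1-\tfrac14|\le\eta$, and
\[
\tfrac14|\inner{v,e_1}|^2 \;=\; v^*\big(\tfrac14 e_1e_1^*\big)v \;=\; \mu_1-v^*\!\big(Y-\tfrac14 e_1e_1^*\big)v \;\ge\; \mu_1-\eta \;\ge\; \tfrac14-2\eta ,
\]
so $|\inner{v,e_1}|\ge\sqrt{1-8\eta}\ge 1-8\eta$. Since $x_0=\lambda v$ with $|\lambda-1|\le\epsilon$,
\[
{\rm dist}(x_0,z)^2 \;=\; \min_{\phi}\|e_1-e^{i\phi}\lambda v\|^2 \;=\; 1+\lambda^2-2\lambda|\inner{v,e_1}| \;\le\; 3\epsilon+16\eta ,
\]
which is $\le 3\theta$ after choosing $\epsilon$ and $\eta$ to be small enough multiples of $\theta$ (the sample-size constant $C_\theta$ and the rate $c_\theta$ being adjusted accordingly); this is \eqref{eq:inidis}. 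I expect the genuine obstacle to be step (ii): the weights $g(X_j/\lambda^2)$ are correlated with the matrices $a_ja_j^*$ through the shared randomness in $X_j$ and in $\lambda^2$, and the operator norms $\|a_j\|^2$ are only sub-exponential, so the matrix concentration must be organized carefully --- bounded weight to avoid a truncation, the coordinate split to decouple the weight from the outer product, a net over the data-dependent normalization $\tau=\lambda^2$ --- in order to obtain the dimension-free count $m=O(n)$ together with the $\exp(-c_\theta n)$ failure probability rather than the $O(n\log n)$ that a blunter argument yields.
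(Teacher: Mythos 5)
Your proposal is correct and follows essentially the same route as the paper: establish $\bigl\|Y-\frac{zz^*}{4\|z\|^2}\bigr\|\le\eta$ by splitting off the perturbation caused by normalizing with the empirical $\lambda^2$ rather than $\|z\|^2$, then run the same Weyl/quadratic-form argument for the top eigenvector and the same computation $1+\lambda^2-2\lambda|\langle v,z\rangle|\le 3\theta\|z\|^2$. The only real divergence is in your step (ii): the paper dispenses with the net over $\tau$ by bounding $\max_j\bigl|e^{-y_j/\|z\|^2}-e^{-y_j/\lambda^2}\bigr|\le\frac{1}{\xi}\,\bigl|\lambda^2-\|z\|^2\bigr|$ uniformly via the mean value theorem together with $xe^{-x}<1$ (a weight-independent Lipschitz bound), which is simpler than your $X_j$-dependent bound plus discretization, while the fixed-$\tau$ concentration is handled there by applying the sub-gaussian covariance estimate to the reweighted vectors $\sqrt{e^{-|a_j^*z|^2/\|z\|^2}}\,a_j$ --- the same "bounded weight, so still sub-gaussian" observation you make.
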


\begin{remark}\label{real_initial_notation}
	Theorem \ref{initial1_1} only considers the case where both $z$ and $a_j$ are complex vectors. For the case where $z$ is real and $a_j$ is complex, we can choose the initial guess $x_0$ as the vector corresponding to the largest eigenvalue of the real part of the matrix $Y$. The (\ref{eq:inidis}) still holds for this case.
	When $ z\in\R^n $, $ a_j\in\R^n, \,j=1,\ldots,m $,  we set
	\begin{small}
		\[
		Y_R:=\frac{1}{m}\sum\limits_{j=1}^{m}\left(\frac{1}{\sqrt{3}}-\exp(-|a_j\zz z|^2/\lambda^2)\right)a_ja_j\zz,
		\]
	\end{small}
	where $ \lambda^2 = \dfrac{1}{m}\sum\limits_{j=1}^{m}y_j $. Then for any $ \theta>0 $, we have
	\begin{small}
			$$
		\left\|Y_R -\frac{2\sqrt{3}}{9\|z\|^2} zz\zz \right\|\leq\frac{\sqrt{3}}{9}\theta
		$$
	\end{small}
	with probability at least $ 1-4\exp(-c_\theta n) $ provided $ m\geq C_\theta n $. Using similar method with the proof of Theorem \ref{initial1_1}, we can obtain
	$$
	\textup{dist}(x_0,z)=\min\{\|x_0-z\|,\|x_0+z\| \}\leq \sqrt{3\theta}\|z\|.
	$$
\end{remark}

\begin{remark}
	It is possible to obtain similar results with replacing $\exp(-y_j/\lambda^2)$ in $Y$ by another bounded function $g(y_j)$. For example, we can take $g(y_j)=\exp(-y_j^p/\lambda^2)$ where $0<p\leq 1$.
 We need adjust the constant $1/2$ in $Y$ when we replace the function $\exp(-y_j/\lambda^2)$ in $Y$ by another bounded function $g(y_j)$.
\end{remark}

\section{ Gauss-Newton Method }\label{sec_gn}

In this section, we present Gauss-Newton iterations which are used to refine the initial guess.

\subsection{Real-valued signals}
We first consider the  case where the exact signal $ z $ is real-valued.  For every measurement vector $ a_j\in\C^n $, we use $ a_{jR} $ and $ a_{jI} $ to represent its real and imaginary part respectively, i.e., $a_j = a_{jR}+ia_{jI} $,
$  a_{jR}\in\R^n, \, a_{jI}\in\R^n,\,  j=1,\ldots,m $. Then we can rewrite  (\ref{eq:phase}) as a nonlinear least square problem
\begin{eqnarray}\label{eq:mreal}
\min\limits_{x\in\R^n}  f(x):=\frac{1}{2m}\sum\limits_{j=1}^m \left(
\langle a_{jR}, x\rangle^2+\langle a_{jI}, x\rangle^2  - y_j
\right)^2,
\end{eqnarray}
where $ y_j = \abs{\langle a_j, z\rangle}^2 $.
To state conveniently,  we set $F_j(x) := \frac{1}{\sqrt{m}}(\langle a_{jR}, x\rangle^2+\langle a_{jI}, x\rangle^2   - y_j)$ and we write (\ref{eq:mreal}) in the form of
\begin{eqnarray}\label{eq:realob}
\min\limits_{x\in\R^n} f(x)=\frac{1}{2}\sum_{j=1}^{m}F_j(x)^2.
\end{eqnarray}
\subsubsection{Gauss-Newton iteration}

To solve the nonlinear least square problem (\ref{eq:realob}), our algorithm uses the well-known Gauss-Newton iteration.
To make the paper self-contained, we introduce the Gauss-Newton iteration in detail (see also \cite{GNiter1, GNiter2}).
Suppose the $ k $-th iteration point $x_k$ is real-valued, we first linearize the nonlinear term $ F_j(x) $ at the point $x_k$:
\begin{align*}
F_j(x) &\approx F_j(x_k)+\nabla F_j(x_k)\zz (x-x_k)\\
&=\frac{1}{\sqrt{m}}\big(\langle a_{jR}, x_k\rangle^2+\langle a_{jI}, x_k\rangle^2-y_j\\
&\quad \quad +2(a_{jR}a_{jR}\zz x_k+a_{jI}a_{jI}\zz x_k)\zz(x-x_k)\big).
\end{align*}
Suppose that the $j$-th row of $J(x_k)\in \R^{m\times n}$  is $\frac{2}{\sqrt{m}}(a_{jR}a_{jR}\zz x_k\\+a_{jI}a_{jI}\zz x_k)\zz  $ and the $j$-th component of $F(x_k)\in \R^m$ is given by $F_j(x_k),\,\, j=1,\ldots,m$.
Then the following least square problem can be considered as an approximation to (\ref{eq:realob}):
\begin{equation}\label{leastsq}
\min_{{x}\in\R^n} \quad \frac{1}{2}\|J(x_k){(x-x_k)}+F(x_k)\|_2^2.
\end{equation}
We choose the next iteration point $ x_{k+1} $ as the solution to (\ref{leastsq}), i.e.,
\begin{align}\label{update_rule}
x_{k+1}& = x_k-\big(J(x_k)\zz J(x_k)\big)^{-1}J(x_k)\zz F(x_k)\\\nonumber
& = x_k-\big(J(x_k)\zz J(x_k)\big)^{-1}\nabla f(x_k),\nonumber
\end{align}
where
\begin{small}
\begin{align}\label{expressionJJ}
\nonumber J(x_k)\zz J(x_k)&=\frac{4}{m}\sum_{j=1}^{m}\bigg((a_{jR}\zz x_k)^2a_{jR}a_{jR}\zz +(a_{jI}\zz x_k)^2a_{jI}a_{jI}\zz\\
&\quad \quad +(a_{jR}\zz x_k)(a_{jI}\zz x_k)(a_{jI}a_{jR}\zz+a_{jR}a_{jI}\zz)\bigg)
\end{align}
\end{small}
and
\begin{small}
\begin{align}\label{expressionJF}
\nonumber& J(x_k)\zz F(x_k) = \nabla f(x_k)\\
 &=\frac{2}{m}\sum_{j=1}^{m}\left((a_{jR}\zz x_k)^2+(a_{jI}\zz x_k)^2-y_j\right)(a_{jR}a_{jR}\zz x_k+a_{jI}a_{jI}\zz x_k).
\end{align}
\end{small}
Thus we obtain the update rule (\ref{update_rule}). Note that the $ x_{k+1} $ is also real-valued.
\subsubsection{Gauss-Newton Method with Re-sampling}
The Gauss-Newton method uses Algorithm \ref{initialization1} to obtain an initial guess $ x_0 $ and iteratively refine $ x_k $ by the update rule (\ref{update_rule}). In theoretical analysis, as we require that the current measurements are independent with the last iteration point (see Ramark \ref{resampling}), we re-sample measurement matrix $ A  $ in every iteration step. Then Algorithm \ref{GN} is in fact a variant of Gauss-Newton method with using different measurements in each iteration.  The re-sampling idea is also used in \cite{PN13} for the alternating minimization algorithm  and in \cite{WF} for the WF  algorithm with coded diffraction patterns.

\begin{algorithm}[!h]
\caption{Gauss-Newton Method with Re-sampling}\label{GN}
	\begin{algorithmic}[h]
		\Require
		Measurement matrix: $ A\in\C^{m\times n}$, observations: $  y\in \R^m $ and
		$ \epsilon>0 $.
		\begin{enumerate}
			\item[1:] Set $ T=c\log\log\frac{1}{\epsilon} $, where $ c $ is a sufficient large constant.
			\item[2:] Partition $ y $ and the corresponding rows of $ A $ into $ T+1 $ equal disjoint sets: $ (y^{(0)}, A^{(0)})$, $(y^{(1)}, A^{(1)})$, $\ldots, (y^{(T)}, A^{(T)})$. The number of rows in $A^{(j)}$ is $m'=m/(T+1)$.
			\item[3:] Set
			$
			\lambda: = \sqrt{1/m'\sum_{j}y_j^{(0)}}.
			$
			Set $ x_0 $ to be the eigenvector corresponding to the largest eigenvalue of the real part of
			\begin{small}
				$$
				 \frac{1}{m}\sum_{j=1}^{m}\left(1/2-\exp(-y^{(0)}_j/\lambda^2)\right)a_j^{(0)}a_j^{(0)}{^*}
				$$
			\end{small}
			with $ \|x_0\|=\lambda $.
			\item[4:] For $ k=0, 1,\ldots, T-1 $ do
			\begin{small}
			   \begin{align*}
				x_{k+1}
				&=x_k - \\
				&\big(J^{k+1}{(x_k)}\zz J^{k+1}(x_k)\big)^{-1} J^{k+1}(x_k)\zz F^{k+1}(x_k).
				\end{align*}
			\end{small}
			\item[5:] End for
		\end{enumerate}
		\Ensure
		$ x_T $.
	\end{algorithmic}
\end{algorithm}

In step 4 of the Algorithm \ref{GN}, the concrete form of matrix \begin{small}
	$J^{k+1}{(x_k)}\zz J^{k+1}(x_k)$
\end{small} and vector
\begin{small}
$J^{k+1}(x_k)\zz F^{k+1}(x_k)$
\end{small} is same with (\ref{expressionJJ}) and (\ref{expressionJF}).
Here $y_1,\ldots,y_{m'}$ are the entries of   $y^{(k+1)}$  and $a_1,\ldots,a_{m'}$
are the rows of $A^{(k+1)}$.
\subsubsection{Convergence Property of Gauss-Newton Method with Re-sampling}
We next present  theoretical convergence property of  Algorithm \ref{GN}.  Without loss of generality, we assume $ \|z\|=1 $.
Theorem \ref{th:maintheorem} illustrates that under given conditions, Algorithm \ref{GN} has a quadratic convergence rate. Furthermore, we show that to achieve an $ \epsilon $ accuracy, the Gauss-Newton method with re-sampling only needs $ O(\log\log(\frac{1}{\epsilon})) $ iterations.
\begin{theorem}\label{th:maintheorem}
	Let $z\in \R^n$ with $ \|z\|=1 $ be an arbitrary vector and $0<\delta \leq 1/93$ be a constant.
	Suppose that $x_k\in \R^n$ satisfies  $ \text{dist}(x_k, z)\leq\sqrt{\delta} $.
	Suppose
	$y_j=\abs{\innerp{a_j,z}}^2$, where $a_j\in \C^n$, $ j=1,\ldots, m $ are Gaussian random measurements with $m\geq Cn\log n$.
	The $x_{k+1}$ is defined by the update rule (\ref{update_rule}). Then with probability at least
	$ 1-c/n^2 $, we have
	\begin{equation}\label{distance_rela}
	\textup{dist}(x_{k+1}, z)\leq\beta\cdot \textup{dist}^2(x_{k}, z),
	\end{equation}
	where
	\begin{small}
	\begin{equation}\label{eq:beta}
	\beta = \frac{8(7+\frac{3}{4}\delta)(1+\sqrt{\delta})}{(8-\delta)(1-\sqrt{\delta})^2}\leq\frac{1}{\sqrt{\delta}}.
	\end{equation}
	\end{small}	
\end{theorem}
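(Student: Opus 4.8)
The plan is to analyze the Gauss-Newton update $x_{k+1} = x_k - (J^\top J)^{-1}\nabla f(x_k)$ by comparing it to an idealized Newton-type step whose behavior near $z$ can be controlled. Without loss of generality assume $\text{dist}(x_k,z) = \|x_k - z\|$ (replace $z$ by $-z$ otherwise). The key identity to exploit is that $\nabla f(x_k) = J(x_k)^\top F(x_k)$, and that $F_j$ is a quadratic in $x$, so the second-order Taylor expansion $F_j(x) = F_j(x_k) + \nabla F_j(x_k)^\top(x-x_k) + \tfrac12 (x-x_k)^\top \nabla^2 F_j(x_k)(x-x_k)$ is \emph{exact}. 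Applying this at $x = z$ (where $F_j(z) = 0$) gives
\[
0 = F(z) = F(x_k) + J(x_k)(z - x_k) + R(x_k,z),
\]
where $R$ collects the exact quadratic remainders $\tfrac{1}{\sqrt m}(\langle a_{jR},x_k-z\rangle^2 + \langle a_{jI},x_k-z\rangle^2)$. Left-multiplying by $(J^\top J)^{-1}J^\top$ and rearranging yields the clean error recursion
\[
x_{k+1} - z = (J(x_k)^\top J(x_k))^{-1} J(x_k)^\top R(x_k,z),
\]
so that $\text{dist}(x_{k+1},z) \le \|(J^\top J)^{-1}\| \cdot \|J^\top R\|$. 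This reduces the whole theorem to two estimates: a lower bound on the smallest eigenvalue of $J(x_k)^\top J(x_k)$, and an upper bound on $\|J(x_k)^\top R(x_k,z)\|$ that is quadratic in $\text{dist}(x_k,z)$.

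For the first estimate, I would use concentration of the random matrix in \eqref{expressionJJ}. Taking expectation over Gaussian $a_j$ (using $z$ real, $x_k$ real, and the moment formulas for products of the real/imaginary Gaussian coordinates), $\E[J(x_k)^\top J(x_k)]$ is an explicit positive-definite matrix of the form $\alpha_1 \|x_k\|^2 I + \alpha_2 x_k x_k^\top$ (plus cross terms that vanish in expectation), whose smallest eigenvalue is bounded below by an absolute constant times $\|x_k\|^2$, hence by a constant since $\|x_k\|$ is close to $\|z\| = 1$ when $\text{dist}(x_k,z)\le \sqrt\delta$. A covering-number / matrix-Bernstein argument (the entries are products of at most four subgaussian factors, so the summands have subexponential-type tails uniformly over the relevant low-dimensional parameter set $\{x_k : \text{dist}(x_k,z)\le\sqrt\delta\}$) then gives $\lambda_{\min}(J^\top J) \ge (8-\delta)/8 \cdot \text{(something)}$ with probability $1 - c/n^2$ once $m \ge Cn\log n$; the $\log n$ factor is exactly what the uniform-over-$x_k$ net costs. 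Matching this with the denominator of $\beta$ in \eqref{eq:beta} pins down the constants.

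For the second estimate, write $J(x_k)^\top R(x_k,z) = \tfrac{2}{m}\sum_j (\langle a_{jR},e\rangle^2 + \langle a_{jI},e\rangle^2)(a_{jR}a_{jR}^\top + a_{jI}a_{jI}^\top)x_k$ with $e := x_k - z$. Its norm is bounded by $\big\|\tfrac{2}{m}\sum_j (\langle a_{jR},e\rangle^2+\langle a_{jI},e\rangle^2)(a_{jR}a_{jR}^\top+a_{jI}a_{jI}^\top)\big\| \cdot \|x_k\|$, and the operator norm of that random PSD matrix concentrates around its expectation, which is a constant multiple of $\|e\|^2 I + (\text{rank-one in }e)$, hence $\le C\|e\|^2 = C\,\text{dist}^2(x_k,z)$; again a uniform bound over $e$ in a ball costs a $\log n$ and $m \ge Cn\log n$ measurements, with probability $1 - c/n^2$. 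Combining the two bounds gives $\text{dist}(x_{k+1},z) \le \beta\,\text{dist}^2(x_k,z)$ with $\beta$ the stated explicit ratio of concentration constants; the final inequality $\beta \le 1/\sqrt\delta$ is then a routine check using $\delta \le 1/93$.

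The main obstacle is the uniform control over $x_k$ (equivalently over $e$): the iterate $x_k$ depends on the earlier measurements, so for the bound to be applied to a data-dependent $x_k$ one must either (i) take a union bound over an $\epsilon$-net of the feasible set $\{\text{dist}(x_k,z)\le\sqrt\delta\}$ and Lipschitz-continuity of the relevant matrix-valued maps in $x_k$, which forces the extra $\log n$ and is the reason the theorem only claims $m \ge Cn\log n$ rather than $O(n)$, or (ii) invoke the re-sampling of Algorithm \ref{GN} so that the $(k{+}1)$-st batch of measurements is independent of $x_k$ and a fixed-vector concentration suffices. I would present the argument via (ii) to keep the probability bounds clean, noting that the statement of Theorem \ref{th:maintheorem} is phrased for a single step with fresh measurements, and flag that the quadratic-remainder structure being \emph{exact} (no higher-order terms) is what upgrades the usual linear Gauss-Newton convergence to the quadratic rate \eqref{distance_rela}.
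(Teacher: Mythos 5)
Your proposal is correct in its essentials but follows a genuinely different (and in fact cleaner) route than the paper. The paper writes $x_{k+1}-z=(J(x_k)^\top J(x_k))^{-1}\bigl[J(x_k)^\top J(x_k)(x_k-z)-(\nabla f(x_k)-\nabla f(z))\bigr]$, represents $\nabla f(x_k)-\nabla f(z)$ as a line integral of $\nabla^2 f=J^\top J+H$ over the segment $S_k$, and then invokes separate Lipschitz bounds for $J(x)^\top J(x)$ and for $H(x)$ on $S_k$ (Lemma V.3 and Corollary V.1), which produces the factor $\tfrac{L_J+L_H}{2}=2(7+\tfrac{3}{4}\delta)(1+\sqrt\delta)$ in the numerator of $\beta$. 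You instead exploit that $F$ is exactly quadratic, so $0=F(z)=F(x_k)+J(x_k)(z-x_k)+R$ with $R_j=\tfrac{1}{\sqrt m}(\langle a_{jR},e\rangle^2+\langle a_{jI},e\rangle^2)$, $e=x_k-z$, giving the closed-form recursion $x_{k+1}-z=(J^\top J)^{-1}J^\top R$; this collapses the two Lipschitz estimates into a single concentration bound for $\tfrac1m\sum_j(\langle a_{jR},e\rangle^2+\langle a_{jI},e\rangle^2)(a_{jR}a_{jR}^\top+a_{jI}a_{jI}^\top)$, whose expectation is $\|e\|^2I_n+ee^\top$, and it yields the numerator factor $2(2+\tfrac{\delta}{4})(1+\sqrt\delta)$ --- strictly smaller than the paper's, so the stated $\beta$ is still valid (your route actually sharpens the constant). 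Both routes need exactly the same two ingredients from the paper's toolbox: the $\lambda_{\min}(J^\top J)\ge(2-\tfrac{\delta}{4})\|x_k\|^2$ estimate (Lemma V.4) and the weighted-covariance concentration of Lemma 7.4 of the WF paper (Lemma V.1), and your choice of option (ii) --- relying on re-sampling so that $x_k$ is independent of the fresh $a_j$ --- is exactly how the paper handles the dependence issue (Remark III.3); the $\epsilon$-net discussion in your option (i) is unnecessary for the theorem as stated. Two small inaccuracies worth fixing: the cross terms $\E[(a_{jR}^\top x)(a_{jI}^\top x)(a_{jR}a_{jI}^\top+a_{jI}a_{jR}^\top)]=\tfrac12xx^\top$ do \emph{not} vanish in expectation (they merely fold into the rank-one part, so $\E[J^\top J]=2\|x_k\|^2I_n+6x_kx_k^\top$ and the conclusion is unaffected), and the probability $1-c/n^2$ with $m\ge Cn\log n$ comes from the fixed-vector fourth-moment concentration itself, not from a net over $x_k$.
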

\begin{remark}\label{lessthanone}
	In Theorem \ref{th:maintheorem}, the reason why we require $0<\delta\leq 1/93$ is
	to guarantee $\beta\cdot\delta \leq \sqrt{\delta}$. Hence the condition $\textup{dist}(x_{k+1}, z)\leq\beta\cdot \delta\leq\sqrt{\delta}$  still holds and we can  use Theorem
	\ref{th:maintheorem} at the $(k+1)$-th iteration.
\end{remark}
According to Theorem \ref{initial1_1} or Remark \ref{real_initial_notation}, for any $ 0<\delta\leq 1/93 $ and $ 0< \theta\leq\delta/3 $, when $m\geq C_\theta n$, it holds  with probability at least $ 1-4\exp(-c_\theta n) $ that
\[
\text{dist}(x_0,z)\leq \sqrt{3\theta}\leq\sqrt{\delta}.
\]
Combining this initialization result with Theorem \ref{th:maintheorem}, we have the following conclusion.
\begin{corollary}\label{iter_step}
	Suppose that $ z\in\R^n $ with $ \|z\|=1 $ is an arbitrary vector and $ a_j\in\C^n $, $ j=1,\ldots,m $ are Gaussian random measurements. Suppose that $\epsilon$ is an arbitrary constant within range $(0,1/2)$ and $\delta\in (0,1/93]$ is a fixed constant.  If $ m\geq C \cdot \log\log\frac{1}{\epsilon}\cdot n\log n $, then with probability at least $ 1-\tilde{c}/n^2 $, Algorithm \ref{GN} outputs $ x_{T} $ such that
	\[
	\textup{dist}(x_{T},z)\,\,<\,\, \epsilon,
	\]
	where $C$ is a constant depending on $\delta$, $ \epsilon $.
\end{corollary}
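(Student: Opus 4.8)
The plan is to bootstrap from the initialization guarantee of Section~\ref{sec_initial} and then iterate the quadratic contraction of Theorem~\ref{th:maintheorem} about $\log\log\frac1\epsilon$ times. Fix $\theta=\delta/6$, so that $3\theta<\delta$, and put $q:=\sqrt{3\theta/\delta}\in(0,1)$. Algorithm~\ref{GN} partitions the data into $T+1=c\log\log\frac1\epsilon+1$ equal batches, each with $m'=m/(T+1)$ rows, and from $m\ge C\log\log\frac1\epsilon\cdot n\log n$ one sees that $m'\ge\max\{C_\theta n,\;C_1 n\log n\}$ provided $C$ is chosen large enough (here $C_1$ is the constant of Theorem~\ref{th:maintheorem}); thus both Theorem~\ref{initial1_1}, via Remark~\ref{real_initial_notation} since $z\in\R^n$ while $a_j\in\C^n$, and Theorem~\ref{th:maintheorem} apply to each single batch. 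Applying Remark~\ref{real_initial_notation} to the batch $(y^{(0)},A^{(0)})$ gives, with probability at least $1-4\exp(-c_\theta n)$, that $\textup{dist}(x_0,z)\le\sqrt{3\theta}\le\sqrt\delta$. Writing $e_k:=\textup{dist}(x_k,z)$, this yields $\beta e_0\le\beta\sqrt{3\theta}\le q<1$ by \eqref{eq:beta}.

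The core is the inductive step. The iterate $x_k$ is a deterministic function of the first $k+1$ batches only; since the rows of $A$ are i.i.d.\ and the batches are disjoint, $A^{(k+1)}$ is independent of $x_k$. Therefore, conditioning on any realization of $x_k$ with $e_k\le\sqrt\delta$, Theorem~\ref{th:maintheorem} applied to the $(k{+}1)$-st Gauss--Newton step yields $e_{k+1}\le\beta e_k^2$ with conditional probability at least $1-c/n^2$, and averaging over $x_k$ preserves this on the event $\{e_k\le\sqrt\delta\}$. By Remark~\ref{lessthanone} one has $\beta e_k^2\le\beta\delta\le\sqrt\delta$, so $e_{k+1}\le\sqrt\delta$ and the hypothesis of Theorem~\ref{th:maintheorem} is restored; the step may therefore be iterated for $k=0,1,\dots,T-1$. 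A union bound over the initialization failure and the $T$ iterative failures shows that, with probability at least $1-4\exp(-c_\theta n)-Tc/n^2\ge1-\tilde c/n^2$, we have both $e_{k+1}\le\beta e_k^2$ for every $k<T$ and $e_k\le\sqrt\delta$ for every $k\le T$.

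It remains to unroll the recursion. The inequality $e_{k+1}\le\beta e_k^2$ is equivalent to $\beta e_{k+1}\le(\beta e_k)^2$, whence $\beta e_T\le(\beta e_0)^{2^T}\le q^{2^T}$ and so $e_T\le\beta^{-1}q^{2^T}$. Since $\beta$ and $q$ depend only on $\delta$, the bound $\beta^{-1}q^{2^T}<\epsilon$ is equivalent to $2^T\log\frac1q>\log\frac1{\beta\epsilon}$, which holds once $T\ge c\log\log\frac1\epsilon$ for a suitable $c=c(\delta)$ (with $C$ in the sample-complexity bound enlarged to match). Hence $\textup{dist}(x_T,z)<\epsilon$, as claimed.

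The main difficulty here is not any single deep estimate, since Theorems~\ref{initial1_1} and~\ref{th:maintheorem} do the substantive work; rather it is the bookkeeping that makes the chaining legitimate: confirming that re-sampling genuinely makes $A^{(k+1)}$ independent of $x_k$, so that the per-iteration high-probability bounds compose under a union bound; checking that every one of the $\Theta(\log\log\frac1\epsilon)$ batches is still of size at least $C_1 n\log n$ after the partition; and fixing the constants $\theta$, $q$, $c$ so that $\beta e_0<1$ holds strictly and the resulting doubly-exponential decay of $e_k$ reaches $\epsilon$ within $O(\log\log\frac1\epsilon)$ iterations.
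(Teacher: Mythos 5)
Your proposal is correct and follows essentially the same route as the paper: invoke Remark~\ref{real_initial_notation} on the first batch to get $\textup{dist}(x_0,z)\le\sqrt{\delta}$, use re-sampling to make each batch independent of the previous iterate so Theorem~\ref{th:maintheorem} composes under a union bound, and unroll $\beta e_{k+1}\le(\beta e_k)^2$ for $T=O(\log\log\frac1\epsilon)$ steps. If anything you are slightly more careful than the paper, which bounds $\textup{dist}(x_T,z)\le(\beta\sqrt\delta)^{2^T}$ using only $\beta\sqrt\delta\le1$ (borderline at $\delta=1/93$), whereas your choice $\theta=\delta/6$ gives a strict contraction factor $q=1/\sqrt2<1$.
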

\begin{IEEEproof}
	According to Theorem \ref{initial1_1} or Remark \ref{real_initial_notation}, we have
	\[
	{\rm dist}(x_0,z)\,\,\leq \,\, \sqrt{\delta}
	\]
	with probability at least $1-4\exp(-c_\delta n)$.
	From Remark \ref{lessthanone}, we know
	\[
	\beta\cdot\delta\leq \sqrt{\delta},
	\]
	where $\beta$ is defined in Theorem \ref{th:maintheorem}. In Algorithm \ref{GN}, we choose $ T = c\log\log\frac{1}{\epsilon} $ and $ m'\geq C_1n\log n $, where $ C_1 $ is a constant depending on $ C,\,c $.
	Iterating (\ref{distance_rela}) in Theorem \ref{th:maintheorem} $ T $ times leads to
	\begin{align*}
	\text{dist}(x_{T},z)&\leq\beta\cdot \text{dist}^2(x_{T-1}, z)\\
	&\leq \beta^{2^T-1}\text{dist}^{2^{T}}(x_0, z)\\
	&\leq \beta^{2^T-1}\cdot(\sqrt{\delta})^{2^{T}}\\
	&\leq (\beta\cdot\sqrt{\delta})^{2^{T}}\\
	&\leq\epsilon,
	\end{align*}
	which holds with probability at least  $ 1-\tilde{c}/n^2 $.
\end{IEEEproof}

\begin{remark}\label{resampling}
In Algorithm 2, we use different measurement vectors in each iteration.
In fact, Theorem III.1 requires that the Gaussian random measurement vectors $a_j$
are independent with $x_k$.
According to (\ref{update_rule}), $x_{k+1}$ depends on the current measurement vectors $a_j$.
Hence, to use Theorem III.1 at the next step, we need choose different measurement vectors which are independent with the previous ones.
\end{remark}
\subsection{Complex-valued signals}
In this subsection, we consider the case where the signal $z$ is complex.
To recover $z\in \C^n$ from $y_1,\ldots,y_m$, we need to solve the following programming:
\begin{equation}\label{c_leastsquare}
\min\limits_{x\in\C^n}  \tilde{f}(x):=\frac{1}{2m}\sum\limits_{j=1}^m (
\abs{ a_{j}^* x}^2 - y_j
)^2:=\frac{1}{2m}\sum\limits_{j=1}^{m}\tilde{F}_j(x)^2,
\end{equation}
where $ \tilde{F}_j(x)= x^* a_ja_{j}^* x - y_j $.
To state conveniently, for $x\in \C^n$,
set
\begin{equation}\label{eq:jing}
x^\sharp := \begin{pmatrix}
x-x_{k}\\ \overline{x}-\overline{x}_{k}
\end{pmatrix}\in\C^{2n}.
\end{equation}
Using a similar argument with above, at the $k$-th iteration, we can update $x_k$ by solving
\begin{equation}\label{cleast}
\min_{x\in\C^n}  \|A_k x^\sharp + \tilde{F}_k \|_2^2,
\end{equation}
where
\[
\tilde{F}_k := \tilde{F}(x_{k}) := (|a_1^*x_{k}|^2-y_1,\ldots, |a_m^*x_{k}|^2-y_m)\zz,
\]
\[
A_k := (\tilde{J}(x_{k}), \overline{\tilde{J}(x_{k})})= \begin{pmatrix}
x_{k}^* a_1a_1^*, & x_{k}\zz\overline{a}_1a_1\zz\\
\vdots & \vdots\\
x_{k}^* a_ma_m^*, & x_{k}\zz\overline{a}_m a_m\zz\\
\end{pmatrix}\in\C^{m\times 2n}.
\]

In fact, if ${\hat x}\in \C^n$ is a solution to (\ref{cleast}) then we can
update $x_k$ by $x_{k+1}={\hat x}+x_k$.
However, the following proposition shows that the solution to (\ref{cleast}) is not unique.
\begin{prop}\label{pr:inv}
	Suppose that ${\hat x}$ is a special solution to (\ref{cleast}). Then for any $c_0\in \R$,
	${\hat x}+ic_0x_k$ is also a solution to (\ref{cleast}) where $i=\sqrt{-1}$.
\end{prop}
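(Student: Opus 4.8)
The plan is to show directly that adding $ic_0 x_k$ to a solution $\hat x$ leaves the objective in \eqref{cleast} unchanged by checking that the perturbation lies in the kernel of the map $x\mapsto A_k x^\sharp$. Recall that for $x=\hat x + ic_0 x_k$ we have $x - x_k = (\hat x - x_k) + ic_0 x_k$, so
\[
x^\sharp = \begin{pmatrix} (\hat x - x_k) + ic_0 x_k \\ (\overline{\hat x} - \overline{x}_k) - ic_0 \overline{x}_k \end{pmatrix} = \hat x^\sharp + ic_0 \begin{pmatrix} x_k \\ -\overline{x}_k \end{pmatrix}.
\]
Hence $A_k x^\sharp = A_k \hat x^\sharp + ic_0 A_k \bigl(\begin{smallmatrix} x_k \\ -\overline{x}_k \end{smallmatrix}\bigr)$, and it suffices to verify that $A_k \bigl(\begin{smallmatrix} x_k \\ -\overline{x}_k \end{smallmatrix}\bigr) = 0$.

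The key computation is row-by-row. The $j$-th row of $A_k$ is $\bigl(x_k^* a_j a_j^*,\ x_k\zz \overline{a}_j a_j\zz\bigr)$, so its action on $\bigl(\begin{smallmatrix} x_k \\ -\overline{x}_k \end{smallmatrix}\bigr)$ gives
\[
x_k^* a_j a_j^* x_k - x_k\zz \overline{a}_j a_j\zz \overline{x}_k.
\]
The first term is $|a_j^* x_k|^2$. For the second term, observe $a_j\zz \overline{x}_k = \overline{a_j^* x_k} = \overline{x_k^* a_j}$ and $x_k\zz \overline{a}_j = \overline{x_k^* a_j}\,$... more carefully: $x_k\zz \overline{a}_j = \overline{\overline{x_k}\zz a_j}$; since $\overline{x_k}\zz a_j = x_k^* a_j$ is a scalar, $x_k\zz \overline{a}_j = \overline{x_k^* a_j}$ as well, wait — that requires care. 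Write it as: $x_k\zz \overline{a}_j \cdot a_j\zz \overline{x}_k = (a_j\zz \overline{x}_k)(x_k\zz \overline{a}_j) = \overline{(a_j^H x_k)}\cdot\overline{(\overline{x_k}^{\,*} \overline{a}_j)}$; cleanest is to note both factors are the complex conjugates of $a_j^* x_k$ and $x_k^* a_j$ respectively, so their product equals $\overline{a_j^* x_k}\cdot\overline{x_k^* a_j} = \overline{a_j^* x_k}\cdot (a_j^* x_k) = |a_j^* x_k|^2$. Therefore the $j$-th entry of $A_k \bigl(\begin{smallmatrix} x_k \\ -\overline{x}_k \end{smallmatrix}\bigr)$ is $|a_j^* x_k|^2 - |a_j^* x_k|^2 = 0$ for every $j$, so the whole vector vanishes.

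Consequently $A_k x^\sharp + \tilde F_k = A_k \hat x^\sharp + \tilde F_k$, so the value of the objective at $\hat x + ic_0 x_k$ equals its value at $\hat x$, which is the minimum; hence $\hat x + ic_0 x_k$ is also a minimizer. I expect the only real obstacle to be bookkeeping with the conjugates and transposes in the Wirtinger-style $2n$-dimensional lift: one must be consistent about whether $a_j\zz$ denotes the transpose (no conjugation) versus $a_j^*$ the conjugate transpose, and track that $x^\sharp$ stacks $x-x_k$ on top of its conjugate. Once the identity $A_k\bigl(\begin{smallmatrix} x_k \\ -\overline{x}_k\end{smallmatrix}\bigr)=0$ is established cleanly, the proposition is immediate; conceptually this kernel vector reflects the global phase ambiguity $x_k \mapsto e^{i\phi}x_k$ of the phaseless problem, linearized at $x_k$.
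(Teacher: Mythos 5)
Your proof is correct and follows essentially the same route as the paper's: both arguments reduce to the identity $A_k\bigl(\begin{smallmatrix} x_k \\ -\overline{x}_k\end{smallmatrix}\bigr)=0$, so that the perturbation $ic_0x_k$ leaves $A_kx^\sharp$ unchanged. The only difference is that you verify this kernel identity row by row (correctly), whereas the paper simply asserts it.
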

\begin{IEEEproof}
	Noting that  $
	A_k\begin{pmatrix}
	x_{k}\\-\overline{x_{k}}
	\end{pmatrix}=0,
	$ we have
	\begin{eqnarray*}
		A_k({\hat x}+ic_0x_k)^\sharp&=A_k\begin{pmatrix}
			\hat{x}+ic_0x_k-x_{k}\\ \overline{\hat x}-ic_0\overline{x}_{k}-\overline{x}_k
		\end{pmatrix}\\
		&=A_k\begin{pmatrix}
			\hat{x}-x_{k}\\ \overline{\hat x}-\overline{x}_k
		\end{pmatrix}=A_k\hat{x}^\sharp,
	\end{eqnarray*}
	which implies the conclusion. Here, we  use the definition of $x^\sharp$ (see (\ref{eq:jing})).
\end{IEEEproof}

%

We denote the solution set to (\ref{cleast}) as ${\mathcal L}_k$.	Our idea is to choose $ \hat{x}\in {\mathcal L}_k $ so that $ \|{x}_{k+1}-x_{k}\|_2 =\|\hat{x}\|_2$ reaches the minimum since we already know $ x_k $ is not far from the exact signal.
 Then we have
 \begin{prop}\label{pr:guangyiinv}
 We use  $ A_k^{\dag} $ to denote the moore-penrose pseudoinverse of $ A_k $. Then
\[
-A_k^{\dag}\tilde{F}_k(1:n)=\argmin{x\in {\mathcal L}_k}\|x\|_2,
\]
 where
$ A_k^{\dag}\tilde{F}_k(1:n) $  denotes the vector consisting of the first $ n $ elements of $ A_k^{\dag}\tilde{F}_k $.
 \end{prop}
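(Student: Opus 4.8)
The plan is to recognize (\ref{cleast}) as a standard (complex) linear least-squares problem $\min_{v\in\C^{2n}} \|A_k v + \tilde F_k\|_2^2$, but with the extra structural constraint that $v$ must be of the form $x^\sharp$ for some $x\in\C^n$, i.e. the bottom block of $v$ is the conjugate of the top block (shifted by $x_k$). First I would record the classical fact that the unconstrained least-squares problem $\min_{v\in\C^{2n}}\|A_k v+\tilde F_k\|_2^2$ has solution set $\{-A_k^\dagger\tilde F_k + w : w\in\ker A_k\}$, and among these the minimum-$\|v\|_2$ element is $-A_k^\dagger\tilde F_k$ itself, since $A_k^\dagger\tilde F_k\perp\ker A_k$ by the defining properties of the Moore--Penrose pseudoinverse. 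The key point to extract from Proposition \ref{pr:inv} and its proof is that $\ker A_k$ contains the vector $(x_k, -\overline{x_k})\zz$, and the translates appearing in $\mathcal L_k$ correspond exactly to adding real multiples of $x_k$ to $\hat x$; more importantly, the minimum-norm element $-A_k^\dagger\tilde F_k$ is automatically of the correct ``$\sharp$'' shape, because $\tilde F_k$ is real and the columns of $A_k$ are arranged so that $A_k$ intertwines complex conjugation of $\C^{2n}$ (swap-and-conjugate the two blocks) with complex conjugation on $\C^m$; hence $A_k^\dagger$ has the same symmetry and $A_k^\dagger\tilde F_k$ is conjugation-symmetric, so its first $n$ entries $u$ and last $n$ entries satisfy (last block)$=\overline u$. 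That is precisely the condition that $u + x_k$ be admissible, i.e. $u = \hat x$ for some $\hat x\in\mathcal L_k$ realized as $(u+x_k)$.

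Next I would translate between the two parametrizations. A vector $x\in\C^n$ lies in $\mathcal L_k$ iff $x^\sharp = (x-x_k,\overline x-\overline{x_k})\zz$ solves the unconstrained problem, iff $A_k x^\sharp = -A_k A_k^\dagger\tilde F_k$ (the projection of $-\tilde F_k$ onto $\mathrm{range}(A_k)$), iff $x^\sharp = -A_k^\dagger\tilde F_k + w$ for some $w\in\ker A_k$. Restricting to $w$ that preserve the $\sharp$-structure (and, by Proposition \ref{pr:inv}, the relevant freedom is $w = (ic_0 x_k, -\overline{ic_0 x_k})\zz$, $c_0\in\R$), one gets $x = -A_k^\dagger\tilde F_k(1{:}n) + ic_0 x_k$. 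Then $\|x\|_2^2 = \|{-A_k^\dagger\tilde F_k(1{:}n)}\|_2^2 + c_0^2\|x_k\|_2^2 + 2c_0\,\mathrm{Im}\langle x_k, -A_k^\dagger\tilde F_k(1{:}n)\rangle$ — wait, I need to be careful here; the cross term will vanish, and the cleanest way to see it is the orthogonality $-A_k^\dagger\tilde F_k \perp \ker A_k$ in $\C^{2n}$, applied to the specific kernel vector $(x_k,-\overline{x_k})\zz$. Indeed $\langle -A_k^\dagger\tilde F_k,\ (x_k,-\overline{x_k})\zz\rangle_{\C^{2n}} = 0$ unpacks to a relation forcing the relevant cross term in $\|x\|_2^2$ to be zero, so $\|x\|_2$ is minimized exactly at $c_0 = 0$, giving $\argmin_{x\in\mathcal L_k}\|x\|_2 = -A_k^\dagger\tilde F_k(1{:}n)$.

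The main obstacle, and the step I would spend the most care on, is the symmetry/consistency argument: showing that $-A_k^\dagger\tilde F_k$ genuinely has the conjugation-symmetric block structure so that it corresponds to an honest point of $\mathcal L_k$ (rather than to a general $v\in\C^{2n}$ outside the admissible affine subspace), and that the only norm-reducing freedom within $\mathcal L_k$ is the one-parameter family from Proposition \ref{pr:inv}. Concretely this amounts to verifying that $A_k \overline{(\,\cdot\,)^{\mathrm{swap}}} = \overline{A_k(\cdot)}$ from the explicit form of $A_k$, deducing the same identity for $A_k^\dagger$ via uniqueness of the pseudoinverse, and applying it to the real vector $\tilde F_k$; and separately, using $\mathrm{rank}$ considerations on $A_k$ (its kernel, restricted to $\sharp$-shaped vectors, is the real line spanned by that one vector) to rule out any further reduction. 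Once this is in place the minimum-norm claim follows immediately from the orthogonality property of the Moore--Penrose pseudoinverse, and the identification of the first $n$ coordinates is just bookkeeping with the definition (\ref{eq:jing}) of $x^\sharp$.
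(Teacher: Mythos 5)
Your proposal is correct and follows essentially the same route as the paper: identify $-A_k^{\dag}\tilde{F}_k$ as the minimum-norm solution of the unconstrained problem over $\C^{2n}$, show it is conjugation-symmetric across its two blocks (so it already lies in the admissible affine subspace), and conclude that its first $n$ entries give the minimizer over $\mathcal L_k$. The only difference is cosmetic: the paper verifies the block symmetry via the limit formula $A_k^{\dag}=\lim_{\delta\to 0}A_k^*(A_kA_k^*+\delta I)^{-1}$ together with the realness of $A_kA_k^*$ and $\tilde F_k$, whereas you invoke the conjugation-intertwining symmetry of $A_k$ and uniqueness of the pseudoinverse; the explicit parametrization and rank discussion in your middle paragraph is unnecessary once that symmetry plus the orthogonality to $\ker A_k$ is in hand, as you yourself note at the end.
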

 \begin{proof}
 According to the property of the moore-penrose pseudoinverse of $ A_k $ \cite{pseudoinverse},
  $-A_k^{\dag}\tilde{F}_k  $ is the minimal norm least square solution to
  \begin{equation}\label{eq:guangzui}
\min_{u\in\C^{2n}}  \|A_k u + \tilde{F}_k \|_2^2.
\end{equation}
We claim that $A_k^{\dag}\tilde{F}_k$ satisfies
\begin{equation}\label{eq:gonge}
A_k^{\dag}\tilde{F}_k(1:n)\,\,=\,\, \overline{A_k^{\dag}\tilde{F}_k(n+1:2n)}
\end{equation}
where $ A_k^{\dag}\tilde{F}_k(1:n) $ and $A_k^{\dag}\tilde{F}_k(n+1:2n)$ denote the vectors consisting of the first and the last $ n $ elements of $ A_k^{\dag}\tilde{F}_k $, respectively.
 And hence $ -A_k^{\dag}\tilde{F}_k(1:n) $ is  a solution to (\ref{cleast}), i.e., $-A_k^{\dag}\tilde{F}_k(1:n)\in {\mathcal L}_k$. Since $-A_k^{\dag}\tilde{F}_k  $ is the minimal
norm least square solution to  (\ref{eq:guangzui}), we obtain the conclusion. We still need show (\ref{eq:gonge}).
 Recall that
\[
A_k^\dag=\lim_{\delta\rightarrow 0}A_k^*\left(A_kA_k^*+\delta I\right)^{-1},
\]
which implies (\ref{eq:gonge}) since $A_kA_k^*+\delta I$ is  a real matrix.
 \end{proof}
 Then we can take the iteration step as
	\begin{equation}\label{eq:upd}
	x_{k+1} = -A_k^{\dag}\tilde{F}_k(1:n) + x_k.
	\end{equation}
    The numerical experiments show (\ref{eq:upd}) has quadratic convergence rate provided the initial guess is not far from the exact signal (see Example \ref{convergence_exam} (b)). The analysis of the convergence property of (\ref{eq:upd}) is the subject of our future work.


\section{Numerical Experiments }\label{sec_experiments}
The purpose of  numerical experiments  is to compare the performance of Gauss-Newton method with that of other existing methods as mentioned above.
In our numerical experiments,  the measurement matrix $A\in \C^{m\times n}$ is generated by Gaussian random measurements and the entries of the original signal $z\in \H^n$ is drawn from standard normal distribution.
\begin{example}\label{initial_exam}
	In this example, we test Algorithm 1 to compare the initial guess of Algorithm 1 with that of spectral initialization (SI), truncated spectral initialization (TSI) and null initialization (NI). For $\H=\C$, we take $n=128$ and change $m$ within the range $[4n, 22n]$. For each $m$, 50 iterations of power method are run to calculate the eigenvectors. We repeat the experiment $50$ times and record the average value of the relative error $ \textup{dist}(x_0, z)/\|z\| $.  Figure 1
	depicts that Algorithm 1 outperforms SI,TSI and NI significantly.
\end{example}
\begin{figure}[!htb]
	\begin{center}
{
\includegraphics[width=0.43\textwidth]{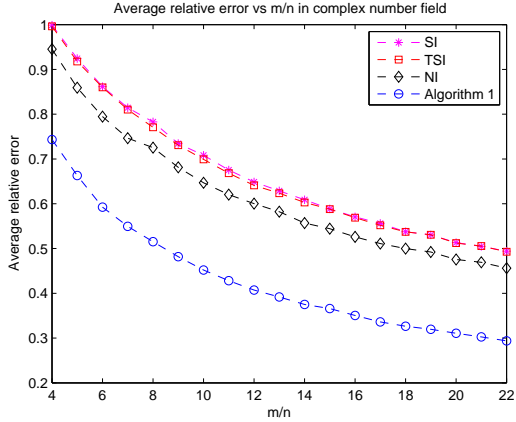}}
		\caption{Initialization experiments: Averaged relative error between $ x_0 $ and $ z $ for $ n = 128 $ and $ m/n $  changing within the range $ [4,22] $. The figures show  that Algorithm 1 performs better than the others in terms of average relative error.}
	\end{center}
\end{figure}

\begin{example}\label{convergence_exam}
In this example, we compare the convergence rate of Gauss-Newton method with that of WF method \cite{WF}, of Altmin Phase method \cite{PN13} and of TAF method \cite{null_initial2}. We take $n=128$, $m/n=5$. Here we use noisy Gaussian data model $ y_j = |\langle a_j, z\rangle|^2+\eta_j ,\,j=1,\ldots,m$, where $ \eta_j $ is chosen according to $ \eta_j\sim\mathcal{N}(0, 0.1^2) $.  We choose the original signal $z\in \R^n$ for (a) and $z\in \C^n$ for (b). When $z\in \C^n$, we use iteration (\ref{eq:upd}) to update the iteration point. Figure 2 depicts the relative error against the iteration number. The numerical results show that Gauss-Newton method  has the better performance in the noisy measurements and converges faster  over the other methods.
\end{example}
\begin{figure*}[!htb]
\begin{center}
\subfigure[]{
\includegraphics[width=0.45\textwidth]{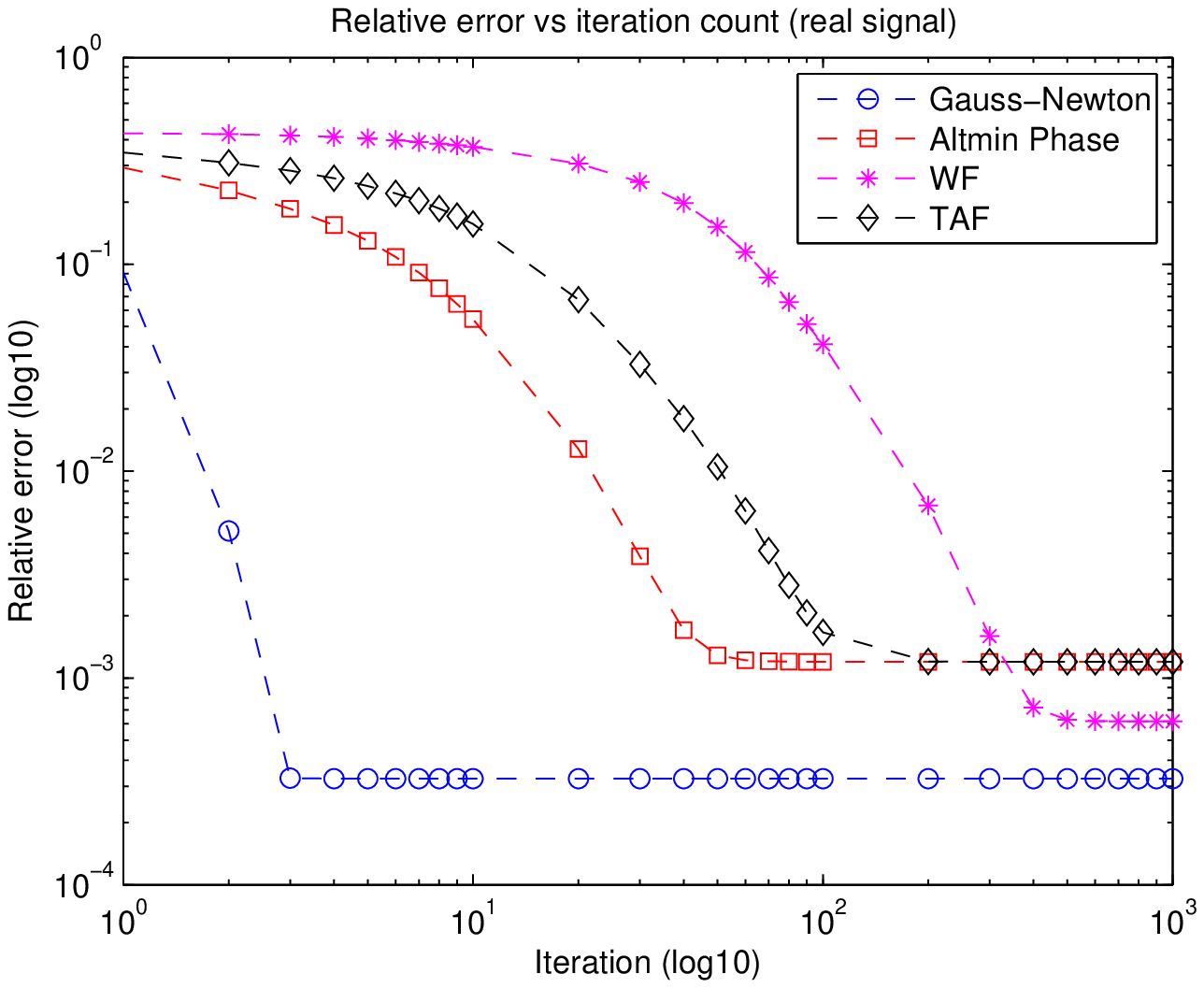}}
\subfigure[]{
\includegraphics[width=0.45\textwidth]{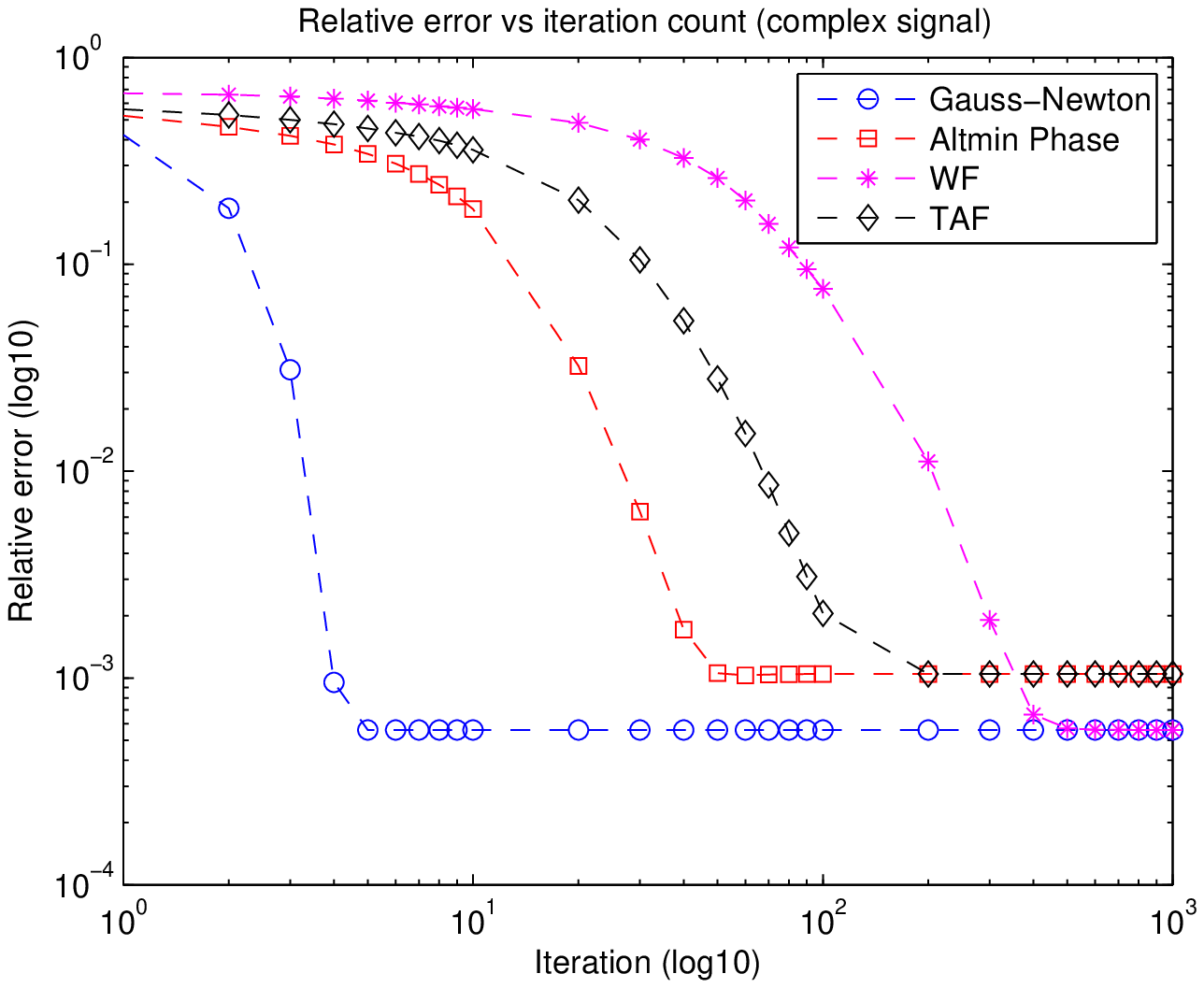}}
\caption{Convergence experiments: Plot of relative error ($ \log(10) $) vs number of iterations ($ \log(10) $) for Gauss-Newton method, Altmin Phase method, WF method and TAF method. Take $ n=128 $, $ m=5n $. The measurements are corrupted with Gaussian noise. The figure (a) (for real signal) and figure (b) (for complex signal) both show that Gauss-Newton method provides better solution and converges faster. }
\end{center}
\end{figure*}

\begin{example}
We compare the CPU time taken by different methods.  For all of them, we use Algorithm \ref{initialization1} to generate the same initial value. That means we only compare the CPU time consumed by the iteration step. Here we define the CPU time of this trial to be the time used until the first iteration after which the relative error is smaller than $ 10^{-5} $. We take $ n=128 $, $ m=5n $, $ z\in\R^n $ and $ y_j = |\langle a_j, z\rangle|^2 ,\,j=1,\ldots,m $. Table \ref{cputime} records the CPU time of these methods and shows that Gauss-Newton method is the most time-saving method.	
\end{example}
\renewcommand\arraystretch{2}
\begin{table}[!htb]		
\begin{center}
\begin{tabular}{|c|c|c|c|c|}
\hline
& Gauss-Newton & Altermin Phase & WF & TAF  \\\hline
Iter & 4  & 68 & 522& 189\\\hline
CPU(s) & 0.0313 & 1.2500 & 1.1719 & 0.0938\\\hline
\end{tabular}
\end{center}
	\caption{CPU Time}
	\label{cputime}
\end{table}
\begin{example}
	In this example, we test the success rate of Gauss-Newton method. Let $ z\in\R^n $ and take $n=128$ and change $m/n$ within the range $[1,10]$ with the step size $0.5$.  For each $m/n$, we repeat 100 times and calculate the success rate. Here we claim a trial successful when the relative error is smaller than $ 10^{-5} $. Figure 3 shows the numerical results with using the recovery algorithm Gauss-Newton, WF, Altmin Phase and TAF, respectively. The figures show that Gauss-Newton method and TAF method can achieve a success rate of $ 100\% $ when $ m/n\geq 3 $, which is much better than WF and Altmin Phase.
\end{example}

\begin{figure}[!htb]
	\begin{center}
		{
		\includegraphics[width=0.45\textwidth]{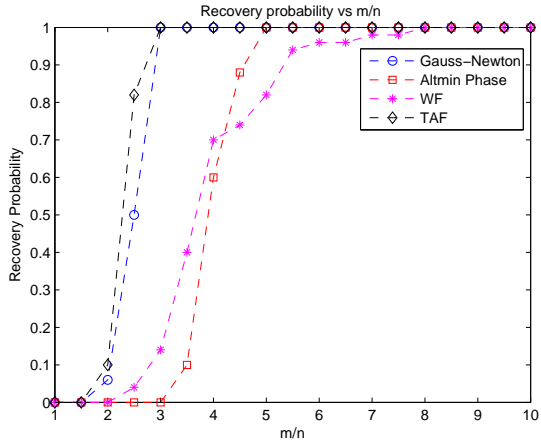}}
		\caption{Success rate experiments: Empirical probability of successful recovery based on 100 random trails for different $ m/n $. Take $ n=128 $ and change $ m/n $ between 1 and 10.  The figures demonstrate that Gauss-Newton method and TAF method are better than WF method and Altmin Phase method in terms of success rate.}
	\end{center}
\end{figure}

\section{Appendix}
\subsection{Proof of Theorem \ref{initial1_1}}
To prove the Theorem \ref{initial1_1}, we first recall some useful results.
\begin{theorem}[Wely Theorem]
	Suppose $ A $, $ B \in \C^{n\times n}$ are two Hermitian matrices. The eigenvalues of $ A $ are denoted as $ \lambda_1\geq\lambda_2\geq\ldots\geq\lambda_n $ and the eigenvalues of $ B $ are denoted as $ \mu_1\geq\mu_2\geq\ldots\geq\mu_n $. Then we have
	\[
	|\mu_i-\lambda_i|\leq\|A-B\|_2, \quad i=1,2,\ldots,n.
	\]
\end{theorem}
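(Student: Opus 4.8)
The plan is to derive both inequalities $\mu_i-\lambda_i\leq\|A-B\|_2$ and $\lambda_i-\mu_i\leq\|A-B\|_2$ from the Courant--Fischer min--max characterization of the eigenvalues of a Hermitian matrix, which I would recall (or quickly justify) at the outset since everything else follows from it by elementary manipulations. For a Hermitian matrix $A$ with eigenvalues $\lambda_1\geq\cdots\geq\lambda_n$, this characterization reads
\[
\lambda_i=\max_{\substack{V\subseteq\C^n\\ \dim V=i}}\ \min_{\substack{x\in V\\ x\neq 0}}\frac{x^*Ax}{x^*x},
\]
and likewise for $B$ with the $\mu_i$.

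First I would write $E:=B-A$, which is again Hermitian, and record the basic bound $|x^*Ex|\leq\|E\|_2\,x^*x$ for every $x$, a consequence of Cauchy--Schwarz together with the definition of the spectral norm: $|x^*Ex|\leq\|x\|\,\|Ex\|\leq\|E\|_2\,\|x\|^2$. Consequently, for every nonzero $x$,
\[
\frac{x^*Bx}{x^*x}=\frac{x^*Ax}{x^*x}+\frac{x^*Ex}{x^*x}\leq\frac{x^*Ax}{x^*x}+\|E\|_2 .
\]
Fixing $i$ and applying the min--max formula to $B$, I take the minimum of this pointwise inequality over nonzero $x\in V$ for an arbitrary $i$-dimensional subspace $V$, obtaining $\min_{x\in V}\frac{x^*Bx}{x^*x}\leq\|E\|_2+\min_{x\in V}\frac{x^*Ax}{x^*x}$; taking the maximum over all such $V$ then yields $\mu_i\leq\|E\|_2+\lambda_i$, that is $\mu_i-\lambda_i\leq\|A-B\|_2$. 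Interchanging the roles of $A$ and $B$ (equivalently, replacing $E$ by $-E$) gives $\lambda_i-\mu_i\leq\|A-B\|_2$, and combining the two bounds produces $|\mu_i-\lambda_i|\leq\|A-B\|_2$ for each $i$, as claimed.

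I expect no serious obstacle here: the entire argument is routine once the min--max characterization is available. The only point requiring genuine care is the bookkeeping of the inequality directions when passing from the pointwise Rayleigh-quotient bound to the nested min--max, together with the observation that the additive constant $\|E\|_2$ survives both the inner minimization and the outer maximization unchanged. If one preferred not to invoke Courant--Fischer as a black box, an alternative is a dimension-counting argument: the span $U$ of the eigenvectors of $A$ for $\lambda_1,\ldots,\lambda_i$ and the span $W$ of the eigenvectors of $B$ for $\mu_i,\ldots,\mu_n$ have dimensions $i$ and $n-i+1$, so $\dim U+\dim W>n$ forces $U\cap W\neq\{0\}$; evaluating at a common unit vector $x$ gives $\lambda_i\leq x^*Ax$ and $x^*Bx\leq\mu_i$, whence $\lambda_i-\mu_i\leq x^*(A-B)x\leq\|A-B\|_2$, with the reverse inequality by symmetry. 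I would nevertheless present the min--max route, as it is the cleaner of the two.
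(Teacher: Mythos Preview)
Your argument is correct and standard: the Courant--Fischer min--max characterization immediately yields both one-sided bounds, and your bookkeeping is fine. The alternative dimension-counting proof you sketch is also valid.

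However, there is nothing to compare against here. In the paper this theorem is merely \emph{recalled} as a known tool (under the heading ``we first recall some useful results'') and is not given a proof; it is then applied in the proofs of Theorem~\ref{initial1_1} and Lemma~\ref{largest_eigen_jj}. So your write-up supplies a proof where the paper simply cites the result.
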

\begin{lemma}[Theorem 5.39 in \cite{non-asym}]\label{chernoff-Hoeffding}
	Assume that $ a_j\in\C^n $, $ j=1,\ldots,m $ are independent sub-gaussian random vectors. For any $ \zeta>0 $, when the number of samples obeys $ m\geq C_\zeta\cdot n $,
	\begin{equation}\label{gauss_exp}
	 \left\|\frac{1}{m}\sum\limits_{j=1}^{m}a_ja_j^*-\E\bigg(\frac{1}{m}\sum\limits_{j=1}^{m}a_ja_j^*\bigg) \right\|\leq\zeta
	\end{equation}
	holds with probability at least $ 1-2\exp(-c_\zeta m) $. Here $ C_\zeta $, $ c_\zeta $ depend on the constant $ \zeta $ and the sub-gaussian norm $ \max_j\|a_j\|_{\psi_2} $.
\end{lemma}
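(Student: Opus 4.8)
My plan is to establish this operator-norm concentration bound through the standard two-step scheme used for sample covariance matrices: reduce the operator norm to a maximum over a finite net of the unit sphere, control the deviation of the quadratic form at each fixed net point via a Bernstein-type inequality, and close with a union bound whose entropy term is absorbed once $m\geq C_\zeta n$.

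Write $\Sigma:=\E\big(\frac{1}{m}\sum_{j=1}^{m}a_ja_j^*\big)$ and let $M:=\frac{1}{m}\sum_{j=1}^{m}a_ja_j^*-\Sigma$, which is Hermitian, so that $\|M\|=\sup_{\|x\|=1}\abs{\langle Mx,x\rangle}$. First I would fix a $\frac14$-net $\NN$ of the unit sphere $\{x\in\C^n:\|x\|=1\}$; a volumetric estimate gives $\abs{\NN}\leq 9^{2n}$ (the sphere of $\C^n$ is the sphere of $\R^{2n}$), and the net yields the comparison $\|M\|\leq 2\max_{x\in\NN}\abs{\langle Mx,x\rangle}$. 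It therefore suffices to bound, for each fixed $x$ in the net,
\[
\langle Mx,x\rangle=\frac{1}{m}\sum_{j=1}^{m}\Big(\abs{\langle a_j,x\rangle}^2-\E\abs{\langle a_j,x\rangle}^2\Big).
\]

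The key probabilistic ingredient is to recognize that for fixed $x$, each summand $Z_j:=\abs{\langle a_j,x\rangle}^2-\E\abs{\langle a_j,x\rangle}^2$ is a centered sub-exponential random variable. Indeed, $\langle a_j,x\rangle$ is sub-gaussian with $\|\langle a_j,x\rangle\|_{\psi_2}\leq K:=\max_j\|a_j\|_{\psi_2}$ by the definition of the sub-gaussian norm of a random vector; the square of a sub-gaussian variable is sub-exponential with $\|Z_j\|_{\psi_1}\leq CK^2$, and centering preserves this up to an absolute constant. Bernstein's inequality for sums of independent centered sub-exponential variables then gives, for each fixed $x$ in the net,
\[
\Prob\big(\abs{\langle Mx,x\rangle}\geq t\big)\leq 2\exp\Big(-c\,m\min\big(t^2/K^4,\,t/K^2\big)\Big).
\]

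Finally I would set $t=\zeta/2$ and take a union bound over $\NN$: the probability that $\max_{x\in\NN}\abs{\langle Mx,x\rangle}$ exceeds $\zeta/2$ is at most $9^{2n}\cdot 2\exp\big(-c\,m\min(\zeta^2,\zeta)/K^4\big)$. Choosing $C_\zeta$ large enough that $m\geq C_\zeta n$ forces the exponent $c\,m\min(\zeta^2,\zeta)/K^4$ to exceed $2n\log 9$ by a fixed multiple of $m$, this bound becomes at most $2\exp(-c_\zeta m)$ for a suitable $c_\zeta>0$; and on the complementary event the comparison inequality yields $\|M\|\leq 2\cdot(\zeta/2)=\zeta$, which is the assertion. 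The main obstacle is the sub-exponential estimate together with its Bernstein bound, namely making precise how the vector sub-gaussian norm $K$ controls $\|Z_j\|_{\psi_1}$, and then verifying that the resulting large-deviation exponent $m\zeta^2/K^4$ dominates the net entropy $2n\log 9$ exactly when $m\geq C_\zeta n$ with $C_\zeta$ depending on $\zeta$ and $K$.
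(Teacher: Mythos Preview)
Your proposal is correct and follows the standard covering-plus-Bernstein argument. Note, however, that the paper does not supply its own proof of this lemma: it is quoted verbatim as Theorem~5.39 from Vershynin's non-asymptotic random matrix notes and used as a black box. The proof you outline is essentially the one given in that reference, so there is nothing to compare on the paper's side.
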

The next lemma plays an essential role in proving  Theorem \ref{initial1_1}.
\begin{lemma}\label{mainlemma1_complex}
	Let $ z\in \C^n $ be a fixed vector. Suppose $ a_j\in\C^n,\,\, j=1,2,\ldots,m $ are the Gaussian random measurements and $y_j=\abs{\innerp{a_j,z}}^2, \,\, j=1,\ldots,m$. Set
	\[
	\lambda^2 = \frac{1}{m}\sum_{j=1}^{m}y_j
	\]
	and
	\begin{small}
	$$
	 Y\,\,:=\,\,\frac{1}{m}\sum\limits_{j=1}^{m}\left(\frac{1}{2}-\exp\Big(-\frac{|a_j^* z|^2}{\lambda^2}\Big) \right)a_ja_j^*.
	$$
	\end{small}
	Then for any $ 0<\eta<1 $,
	\[
	\left\|Y-\frac{zz^*}{4\|z\|^2}\right\|\leq\eta
	\]
	holds with probability at least $ 1-4\exp(-c_\eta n)$ provided  $ m\geq C_\eta n $, where $ c_\eta>0 $, $ C_\eta $ are constants depending on $ \eta $.
\end{lemma}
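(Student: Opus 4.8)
The plan is to compare $Y$ with the intermediate matrix
\[
Y_1:=\frac{1}{m}\sum_{j=1}^{m}\left(\frac12-\exp\big(-|a_j^*z|^2/\|z\|^2\big)\right)a_ja_j^*,
\]
writing $Y-\frac{zz^*}{4\|z\|^2}=(Y-Y_1)+\big(Y_1-\E Y_1\big)$ and showing first that $\E Y_1=\frac{zz^*}{4\|z\|^2}$. Since replacing $z$ by $cz$ (with $c>0$) leaves $Y$, $Y_1$ and $\frac{zz^*}{\|z\|^2}$ unchanged, I may assume $\|z\|=1$; then $y_j=|a_j^*z|^2$ is $\mathrm{Exp}(1)$--distributed and $\lambda^2=\frac1m\sum_j y_j=z^*\big(\frac1m\sum_j a_ja_j^*\big)z$. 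By the unitary invariance of the complex Gaussian vectors I may further take $z=e_1$; then $\E Y_1$ is diagonal, its $(1,1)$--entry equals $\E[(\tfrac12-e^{-t})t]=\tfrac12\E t-\E[te^{-t}]=\tfrac12-\tfrac14=\tfrac14$ for $t\sim\mathrm{Exp}(1)$, each other diagonal entry equals $\E[\tfrac12-e^{-t}]\cdot\E|a_{jk}|^2=(\tfrac12-\tfrac12)\cdot1=0$, and off-diagonal entries vanish by independence, so $\E Y_1=\tfrac14 e_1e_1^*=\frac{zz^*}{4\|z\|^2}$.

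For the term $Y_1-\E Y_1$ I would use an $\varepsilon$--net argument on $\mathbb S^{n-1}$. The decisive point is that $h(t):=\tfrac12-e^{-t}$ obeys $|h(t)|\le\tfrac12$ on $[0,\infty)$, so that for a fixed unit vector $v$ the scalars $X_j:=h(|a_j^*z|^2)\,|a_j^*v|^2$ are i.i.d.\ with $|X_j|\le\tfrac12|a_j^*v|^2$; since $|a_j^*v|^2\sim\mathrm{Exp}(1)$, the $X_j$ are sub-exponential with $\psi_1$--norm bounded by an absolute constant, uniformly in $v$ and $z$. Bernstein's inequality gives $\Pr\big[\,\big|\tfrac1m\sum_j(X_j-\E X_j)\big|>\eta/6\,\big]\le 2\exp(-c_1\eta^2 m)$, where $\tfrac1m\sum_j(X_j-\E X_j)=v^*(Y_1-\E Y_1)v$. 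Choosing a $\tfrac14$--net $\NN$ of $\mathbb S^{n-1}$ with $|\NN|\le 9^n$, a union bound together with the standard inequality $\|Y_1-\E Y_1\|\le 2\max_{v\in\NN}|v^*(Y_1-\E Y_1)v|$ for Hermitian matrices yields $\|Y_1-\E Y_1\|\le\eta/3$ with probability at least $1-9^n\cdot2\exp(-c_1\eta^2 m)\ge 1-2\exp(-c_\eta n)$, provided $m\ge C_\eta n$.

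It remains to bound $\|Y-Y_1\|$, i.e.\ the effect of replacing $\lambda^2$ by $\|z\|^2=1$. By Lemma \ref{chernoff-Hoeffding} with $\E a_ja_j^*=I$, off an event of probability at most $2\exp(-c_2n)$ (for $m\ge C_\zeta n$) we have $\|\tfrac1m\sum_j a_ja_j^*-I\|\le\zeta$, which in particular forces $|\lambda^2-1|=\big|z^*\big(\tfrac1m\sum_j a_ja_j^*-I\big)z\big|\le\zeta$. On this event the mean value theorem applied to $r\mapsto e^{-y_j/r}$ on the segment joining $1$ and $\lambda^2$ gives, uniformly in $j$,
\[
\big|e^{-y_j/\lambda^2}-e^{-y_j}\big|\ \le\ |\lambda^2-1|\cdot\sup_{r\ge 1-\zeta}\frac{y_j}{r^2}e^{-y_j/r}\ \le\ \frac{\zeta}{1-\zeta}\,\sup_{s\ge 0}se^{-s}\ =\ \frac{\zeta e^{-1}}{1-\zeta},
\]
the essential point being uniformity in $j$, which holds because $s\mapsto se^{-s}$ is bounded. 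Hence for every unit $v$, $|v^*(Y-Y_1)v|\le\frac{\zeta e^{-1}}{1-\zeta}\,v^*\big(\tfrac1m\sum_j a_ja_j^*\big)v\le\frac{\zeta e^{-1}(1+\zeta)}{1-\zeta}$, so $\|Y-Y_1\|\le\eta/3$ once $\zeta$ is chosen small in terms of $\eta$. Combining the two bounds on the intersection of the two good events (total failure probability $\le 4\exp(-c_\eta n)$) gives $\|Y-\frac{zz^*}{4\|z\|^2}\|\le\eta/3+\eta/3\le\eta$, as claimed.

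The crux is the concentration step for $Y_1$. Because the weights $h(|a_j^*z|^2)$ are correlated with the rank-one matrices $a_ja_j^*$, one cannot simply invoke a matrix-concentration inequality with deterministic coefficients, and must instead carry out the net-plus-Bernstein argument with genuinely sub-exponential (unbounded) summands; balancing the net cardinality $9^n$ against the Bernstein decay $\exp(-c_1\eta^2 m)$ is precisely what forces $m\ge C_\eta n$ with $C_\eta$ growing as $\eta\to0$. A secondary subtlety is the uniform-in-$j$ estimate in the last step, which without the boundedness of $s\mapsto se^{-s}$ would degrade like $\max_j y_j\asymp\log m$ and be of no use.
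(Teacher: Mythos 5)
Your proof is correct and follows essentially the same route as the paper: the same decomposition $Y-\frac{zz^*}{4\|z\|^2}=(Y-Y_1)+(Y_1-\E Y_1)$ with the same intermediate matrix $Y_1$, the same computation $\E Y_1=\frac{zz^*}{4\|z\|^2}$, and the same mean-value-theorem treatment of the $\lambda^2$ versus $\|z\|^2$ discrepancy using $|\lambda^2-\|z\|^2|\lesssim\zeta\|z\|^2$ and the boundedness of $s\mapsto se^{-s}$. The one place you diverge is the concentration of $Y_1$ about its mean: the paper absorbs the weight into the vector by writing the weighted sum as $\frac{1}{m}\sum_j b_jb_j^*$ with $b_j:=\sqrt{\exp(-|a_j^*z|^2/\|z\|^2)}\,a_j$, splits $Y_1$ into two such sums, and invokes the cited sub-gaussian covariance estimate (Lemma V.1) twice; you instead prove the concentration directly by a net-plus-Bernstein argument with sub-exponential summands. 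The two are essentially equivalent (your argument is in effect the proof of the cited lemma, adapted to non-isotropic rows), so your remark that a matrix-concentration input "cannot simply be invoked" slightly overstates the obstacle — the square-root reweighting trick sidesteps it — but your self-contained version is equally valid and arguably cleaner, since it avoids having to check that the lemma applies to the non-isotropic vectors $b_j$. Two cosmetic points: the net should be taken on the complex unit sphere (cardinality $9^{2n}$ rather than $9^n$), which changes only constants; and your constant $\sup_{s\ge0}se^{-s}=e^{-1}$ is in fact sharper than the paper's bound $xe^{-x}<1$.
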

\begin{IEEEproof}
	Set
	\begin{small}
	\[
	Y_1\,\, :=\frac{1}{m}\sum\limits_{j=1}^{m}\left(\frac{1}{2}-\exp\left(-\frac{|a_j^* z|^2}{\|z\|^2}\right) \right)a_ja_j^*.
	\]
	\end{small}	
	As $ a_j\in\C^n,\,\, j=1,2,\ldots,m $ are the Gaussian random measurements, a simple moment calculation gives
	$$
	\E (Y_1)\,\,=\,\, \frac{zz^*}{4\|z\|^2}.
	$$
	Then
	\begin{align}\label{exp}
	\left\|Y-\frac{zz^*}{4\|z\|^2}\right\|=\|Y-\E (Y_1)\|\leq\|Y - Y_1\|+\|Y_1-\E (Y_1)\|.
	\end{align}
	We first consider  the second term of (\ref{exp}), i.e., $\|Y_1-\E (Y_1)\|$.
	Note that both $ a_j $ and $ \sqrt{\exp(-|a_j^*z|^2/\|z\|^2)}a_j $, $ j=1,\ldots,m $ are all sub-gaussian random vectors with
	\begin{small}
		\begin{equation*}
		\E \left(\frac{1}{m}\sum\limits_{j=1}^{m}a_ja_j^*\right)=I_n
		\end{equation*}
	\end{small}
	and
	\begin{small}
		\begin{equation*}
		\E \left(\frac{1}{m}\sum\limits_{j=1}^{m}\exp\left(-\frac{|a_j^*z|^2}{\|z\|^2}\right)a_ja_j^*\right)=\frac{1}{2}I_n-\frac{zz^*}{4\|z\|^2}.
		\end{equation*}
	\end{small}	
	Using Lemma \ref{chernoff-Hoeffding}, 	we obtain that
	\begin{small}
	\begin{equation}\label{exp_21}
	 \bigg\|\frac{1}{m}\sum\limits_{j=1}^{m}a_ja_j^*-I_n\bigg\|\leq\frac{1}{4}\eta
	\end{equation}
	\end{small}	
	and
	\begin{small}
		\begin{equation}\label{exp_22}
		 \bigg\|\frac{1}{m}\sum\limits_{j=1}^{m}\exp\left(-\frac{|a_j^*z|^2}{\|z\|^2}\right)a_ja_j^*-\left(\frac{1}{2}I_n-\frac{zz^*}{4\|z\|^2}\right)\bigg\|\leq\frac{1}{8}\eta
		\end{equation}
	\end{small}	
	holds with probability at least $ 1-4\exp(-c_\eta n) $ provided $ m\geq C_\eta n $, where $ c_\eta$, $ C_\eta $ are constants depending on $ \eta $ and subgassian norm of  $ a_j $, $ \sqrt{\exp(-|a_j^*z|^2/\|z\|^2)}a_j $, $ j=1,\ldots,m $. The inequality (\ref{exp_21}) also implies that
	\begin{equation}\label{lambda}
	\big|\lambda^2 - \|z\|^2\big|\leq\frac{\eta}{4}\|z\|^2
	\end{equation}
	holds with high probability.
	Combining  (\ref{exp_21}) and (\ref{exp_22}), we obtain that
\begin{small}
\begin{align}\label{exp_2}
&\|Y_1-\E Y_1\|\\\nonumber
&=\bigg\|\frac{1}{m}\sum\limits_{j=1}^{m}\frac{1}{2}a_ja_j^*-\frac{1}{m}\sum\limits_{j=1}^{m}\exp\left(-\frac{|a_j^*z|^2}{\|z\|^2}\right) a_ja_j^*\\\nonumber
&\quad\quad\quad-\frac{1}{2}I_n+\frac{1}{2}I_n-\frac{zz^*}{4\|z\|^2}\bigg\|\\\nonumber
&\leq\frac{1}{2}\bigg\|\frac{1}{m}\sum\limits_{j=1}^{m}a_ja_j^*-I_n\bigg\|\\\nonumber
&\quad\quad\quad+\bigg\|\frac{1}{m}\sum\limits_{j=1}^{m}\exp\left(-\frac{|a_j^*z|^2}{\|z\|^2}\right)a_ja_j^*-\left(\frac{1}{2}I_n-\frac{zz^*}{4\|z\|^2}\right)\bigg\|\\\nonumber
&\leq\frac{1}{4}\eta.
\end{align}
\end{small}

	For the first term of (\ref{exp}), we have
	\begin{small}
		\begin{align*}
			&Y- Y_1\\
			&= \frac{1}{m}\sum\limits_{j=1}^{m}\left(\exp\Big(-\frac{|a_j^* z|^2}{\|z\|^2}\Big)-\exp\Big(-\frac{|a_j^* z|^2}{\lambda^2}\Big) \right)a_ja_j^*.
		\end{align*}
	\end{small}
	By (\ref{exp_21}), we have
\begin{small}
	\begin{align}\label{exp_1}
	&	\|Y-Y_1 \|\\\nonumber
	&\leq \max_j\left|\exp\bigg(-\frac{|a_j^* z|^2}{\|z\|^2}\bigg)-\exp\bigg(-\frac{|a_j^* z|^2}{\lambda^2}\bigg)\right|\cdot\left(1+\frac{1}{4}\eta\right)\\\nonumber &=\max_j\exp\bigg(-\frac{|a_j^*z|^2}{\xi}\bigg)\frac{|a_j^*z|^2}{\xi^2}\cdot\big|\|z\|^2-\lambda^2\big|\cdot\left(1+\frac{1}{4}\eta\right)\\\nonumber
		&\leq\frac{1}{\xi}\cdot\frac{\eta}{4}\|z\|^2\cdot(1+\frac{1}{4}\eta)\leq\frac{3}{4}\eta,
		\end{align}
\end{small}
where the second inequality dues to (\ref{lambda}) and the fact that $ x\exp(-x)<1 $ for all $ x $. The
	second line of (\ref{exp_1}) uses  the Lagrange's mean value theorem with  $ \xi\in [(1-\frac{\eta}{4})\|z\|^2,\,\,(1+\frac{\eta}{4})\|z\|^2] $ with high probability. Thus putting (\ref{exp_2}) and (\ref{exp_1}) into (\ref{exp}), we get
	\[
	\|Y - \E (Y_1) \|\leq\eta.
	\]
	So the conclusion holds with probability at least $ 1-4\exp(-c_\eta n) $ provided $ m\geq C_\eta n $, where $ c_\eta $, $ C_\eta $ are constants depending on $\eta $.
\end{IEEEproof}
Now we begin to prove  Theorem \ref{initial1_1}.
\begin{IEEEproof}[\bf Proof of Theorem \ref{initial1_1}]
	Suppose $ \tilde{x}_0 $ with $\|\tilde{x}_0\|=1  $ is the eigenvector corresponding to the largest eigenvalue $ \lambda_{\max}(Y) $ of
	$$
	 Y=\frac{1}{m}\sum_{j=1}^{m}\left(\frac{1}{2}-\exp(-|a_j^* z|^2/\lambda^2)\right)a_ja_j^*.
	$$
	From Lemma \ref{mainlemma1_complex}, for any $ 0< \theta\leq 1 $ and $ m\geq C_\theta n$, we have
	\begin{small}
		\[
		 \bigg\|Y-\frac{zz^*}{4\|z\|^2}\bigg\|\,\,\leq\,\,\frac{\theta}{8}
		\]
	\end{small}
	with probability at least $ 1-4\exp(-c_\theta n) $. Note that the largest eigenvalue of $ \dfrac{zz^*}{4\|z\|^2} $ is $ \dfrac{1}{4} $.
	Then according to the Wely Theorem,
	\begin{small}
	\begin{align}\label{leq_com}
	 \left|\lambda_{\max}(Y)-\frac{1}{4}\right|\,\,\leq\,\,\frac{\theta}{8}
	\end{align}
	\end{small}	
	holds with probability  at least $ 1-4\exp(-c_\theta n) $.
	On the other hand,
	\begin{small}
		\begin{align}\label{geq_com}
		 \frac{\theta}{8}&\,\,\geq\,\,\bigg\|Y-\frac{zz^*}{4\|z\|^2}\bigg\|\\\nonumber
		 &\,\,\geq\,\,\bigg|\tilde{x}_0^*(Y-\frac{zz^*}{4\|z\|^2})\tilde{x}_0\bigg|\\\nonumber
		 &\,\,=\,\,\bigg|\lambda_{\max}(Y)-\frac{1}{4}+\frac{1}{4}-\frac{|\tilde{x}_0^* z|^2}{4\|z\|^2}\bigg|\\\nonumber
		&\,\,\geq\,\,\bigg|\frac{|\tilde{x}_0^* z|^2}{4\|z\|^2}-\frac{1}{4}\bigg|-\bigg|\frac{1}{4}-\lambda_{\max}(Y)\bigg|.
		\end{align}
	\end{small}	
	Combining (\ref{leq_com}) and (\ref{geq_com}), we obtain
	\[
	\frac{|\tilde{x}_0^* z|^2}{\|z\|^2}\geq 1-\theta.
	\]
	From the proof of Lemma \ref{mainlemma1_complex} (see (\ref{lambda})), we have
	\[
	-\theta\|z\|^2\leq\lambda^2 - \|z\|^2\leq \theta\|z\|^2
	\]
	with high probability. So set $ x_0=\lambda\tilde{x}_0 $, for any $ 0< \theta\leq1 $, we have
	\begin{align*}
	\text{dist}^2(x_0,z)&=\min_{\phi\in [0,2\pi)} \|z-e^{i\phi}\lambda\tilde{x}_0\|^2\\
	&\leq  \|z\|^2+\lambda^2-2\lambda |\tilde{x}_0^* z|\\
	&\leq  \|z\|^2+(1+\theta)\|z\|^2-2{(1-\theta)}\|z\|^2\\
	&= 3\theta\|z\|^2
	\end{align*}
	with probability at least $ 1-4\exp(- c_\theta n) $ provided $ m\geq C_\theta n $. Thus we get the conclusion
	\begin{align*}
	\text{dist}(x_0, z)&\,\leq\, \sqrt{3 \theta}\|z\|.
	\end{align*}
\end{IEEEproof}

\subsection{Proof of Theorem \ref{th:maintheorem}}
In this section, we devote to prove the Theorem \ref{th:maintheorem}. At first, we give some essential lemmas.
\begin{lemma}\label{WF_lemma}[Lemma 7.4 in \cite{WF}]
	For a signal $ x\in\H^n $, suppose that $ a_j\in\C^n $, $ j=1,2,\ldots, m $ are Gaussian random measurements and $ m\geq Cn\log n $,  where  $ C $ is sufficiently large. Set
	\[
	 S:=\frac{1}{m}\sum\limits_{j=1}^{m}|a_j^*x|^2a_ja_j^*.
	\]
	Then for any $ \delta> 0 $,
	\[
	\|S-\E(S)\|\leq\frac{\delta}{4}\|x\|^2
	\]
	holds with probability at least $ 1-5\exp(-\gamma_\delta n)-4/n^2 $.
\end{lemma}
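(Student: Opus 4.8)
The plan is to establish this operator-norm concentration bound by a truncation-plus-net argument, the standard route for fourth-order Gaussian sums. By homogeneity I would first assume $\|x\|=1$, since both $S$ and $\E(S)$ scale like $\|x\|^2$; a short Wick (Isserlis) computation for circularly symmetric complex Gaussians gives $\E(S)=\|x\|^2 I+xx^*$, though the argument below does not actually need this explicit form, as each scalar quadratic form will be centered at its own mean. The crux of the difficulty --- and the reason the hypothesis is $m\geq Cn\log n$ rather than $m\geq Cn$ --- is that the summands $|a_j^*x|^2 a_ja_j^*$ are fourth-order polynomials in the Gaussian entries: they are neither sub-gaussian nor sub-exponential, so neither Lemma \ref{chernoff-Hoeffding} nor matrix Bernstein applies directly, and the heavy $\exp(-c\sqrt{t})$ tails must first be tamed.

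I would truncate at the level $\tau:=C_0\log n$ (legitimate since the regime of interest has $m$ polynomial in $n$). Because $|a_j^*x|^2$ is $\mathrm{Exp}(1)$ distributed --- $a_j^*x$ is a standard complex Gaussian scalar when $\|x\|=1$, whether $x$ is real or complex --- we have $\Prob(|a_j^*x|^2>\tau)=e^{-\tau}$. Hence on the event $\mathcal{G}:=\{\,|a_j^*x|^2\leq\tau\ \text{for all }j\,\}$, which by a union bound holds with probability at least $1-me^{-\tau}\geq 1-4/n^2$ for an appropriate $C_0$, the matrix $S$ coincides with the truncated matrix $\widetilde S:=\frac1m\sum_{j=1}^m|a_j^*x|^2a_ja_j^*\mathbf{1}_{\{|a_j^*x|^2\leq\tau\}}$. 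The mass discarded from the mean, $\|\E S-\E\widetilde S\|$, is an average over the rare event $\{|a_j^*x|^2>\tau\}$ and is therefore $O(n^{-2})$, negligible against $\delta/4$. This truncation is the origin of the $4/n^2$ term.

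The core step is to control $\|\widetilde S-\E\widetilde S\|$. As $\widetilde S-\E\widetilde S$ is Hermitian, I would pass to a $\tfrac14$-net $\mathcal{N}$ of the unit sphere with $|\mathcal{N}|\leq 9^{2n}$, using $\|\widetilde S-\E\widetilde S\|\leq 2\sup_{u\in\mathcal{N}}|u^*(\widetilde S-\E\widetilde S)u|$. For fixed unit $u$ the scalar form averages the independent variables $u^*\widetilde M_ju=|a_j^*x|^2|a_j^*u|^2\mathbf{1}_{\{|a_j^*x|^2\leq\tau\}}$. The decisive observation is that although these truncated summands have $\psi_1$-norm $O(\tau)=O(\log n)$, their \emph{variance} is only $O(1)$, since the bulk of $|a_j^*u|^2$ sits at scale one. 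Feeding both quantities into the variance-sensitive Bernstein inequality --- so that the $O(1)$ variance, not the squared $\psi_1$-norm, governs the central regime --- gives
\begin{small}
\[
\Prob\!\left(\Big|\tfrac1m\textstyle\sum_{j=1}^m\big(u^*\widetilde M_ju-\E\,u^*\widetilde M_ju\big)\Big|>\tfrac{\delta}{8}\right)\leq 2\exp\!\left(-\frac{c\,m\,\delta^2/2}{\sigma^2+\tau\delta}\right)\leq 2\exp\!\left(-\frac{c'\delta^2 m}{\log n}\right),
\]
\end{small}
where $\sigma^2=O(1)$ bounds the per-term variance and $\tau=C_0\log n$ the sub-exponential parameter.

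Finally I would union-bound over the net, multiplying the right side by $|\mathcal{N}|\leq 9^{2n}$. Choosing $m\geq Cn\log n$ with $C$ large forces $c'\delta^2 m/\log n\geq 2n\log 9+\gamma_\delta n$, so the uniform scalar bound --- and with it $\|\widetilde S-\E\widetilde S\|\leq\delta/4$ --- fails with probability at most $5\exp(-\gamma_\delta n)$. This is exactly where $n\log n$ is forced: the $\log n$ factor that truncation injects into the sub-exponential parameter must be absorbed by an extra $\log n$ in $m$ to beat the $e^{O(n)}$ net cardinality; had I (wrongly) used the squared $\psi_1$-norm as the variance proxy I would have obtained the weaker $m\gtrsim n(\log n)^2$. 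Combining the pieces by the triangle inequality, on $\mathcal{G}$ together with the net event one has $\|S-\E S\|\leq\|\widetilde S-\E\widetilde S\|+\|\E\widetilde S-\E S\|\leq\delta/4$, and undoing the normalization returns the general bound $\tfrac{\delta}{4}\|x\|^2$ with probability at least $1-5\exp(-\gamma_\delta n)-4/n^2$. The main obstacle throughout is the heavy-tailed quartic summand: the entire architecture --- the truncation level $\tau\sim\log n$, the variance-versus-$\psi_1$ split inside Bernstein, and the net union bound --- is dictated by the need to convert these non-sub-exponential terms into a controllable form while sacrificing only a single logarithmic factor.
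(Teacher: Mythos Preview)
The paper does not actually prove this lemma: it is quoted verbatim as Lemma 7.4 of \cite{WF} and used as a black box, so there is no ``paper's own proof'' to compare against. Your proposal is essentially the argument given in the original reference \cite{WF}: truncate the heavy-tailed weight $|a_j^*x|^2$ at level $\tau\asymp\log n$ (producing the $4/n^2$ failure probability), then control the truncated sum by a Bernstein-type bound combined with a net over the sphere (producing the $5\exp(-\gamma_\delta n)$ term and forcing the extra $\log n$ in $m$). Your sketch is correct and tracks that proof closely; the only small caveat is that the union bound $me^{-\tau}\leq 4/n^2$ tacitly assumes $m$ is at most polynomial in $n$, which is harmless here but worth stating explicitly.
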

Recall that
$S_k= \{t z+(1-t)x_k: 0\leq t\leq 1\}$.
We set
\begin{small}
	\begin{align}\label{H_definition}
	 & H(x): =\nabla^2 f(x)-J(x)\zz J(x)\\\nonumber
	&=\frac{2}{m}\sum_{j=1}^{m}\left((a_{jR}\zz x)^2 +(a_{jI}\zz x)^2-y_j\right)(a_{jR}a_{jR}\zz+a_{jI}a_{jI}\zz).
	\end{align}
\end{small}
\begin{lemma}\label{J(x)}
	Suppose that  $ \|x_k - z\|\leq\sqrt{\delta} $, where
	$ x_k, z\in\R^n $ with $\|z\|=1$ and
	$ 0< \delta\leq 1/93 $ is a constant. Suppose that the measurement vectors $ a_j\in\C^n $, $ j=1,\ldots, m $ are Gaussian random measurements, which are independent with $ x_k $ and $ z $. Then when $ m\geq Cn\log n $,
	\begin{small}
		\begin{align*}
		J(x)\zz J(x)=\frac{4}{m}\sum_{j=1}^{m}\bigg(&(a_{jR}\zz x)^2a_{jR}a_{jR}\zz +(a_{jI}\zz x)^2a_{jI}a_{jI}\zz\\
		& +(a_{jR}\zz x)(a_{jI}\zz x)(a_{jI}a_{jR}\zz+a_{jR}a_{jI}\zz)\bigg)
		\end{align*}
	\end{small}
	is $ L_J $-Lipschitz continuous on $ S_k $ with probability at least $ 1-5\exp(-\gamma_\delta n)-4/n^2 $, i.e, for any $x,y \in S_k$,
	\[
	\|J(x)\zz J(x)-J(y)\zz J(y)\|\,\,\leq\,\, L_J\|x-y\|
	\]
	holds with $ L_J = 8(2+\frac{\delta}{4})(1+\sqrt{\delta}) $.
\end{lemma}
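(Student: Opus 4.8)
\emph{Proof strategy.} The plan is to use the fact that, viewed as a function of $x$, the matrix $J(x)\zz J(x)$ is a quadratic form. Write $M(x):=J(x)\zz J(x)$ and $B_j:=a_{jR}a_{jR}\zz+a_{jI}a_{jI}\zz$; then $B_j x=(a_{jR}\zz x)a_{jR}+(a_{jI}\zz x)a_{jI}$, so \eqref{expressionJJ} becomes
\[
M(x)=\frac{4}{m}\sum_{j=1}^{m}B_j\,xx\zz\,B_j.
\]
In particular $p\zz M(x)p=\tfrac{4}{m}\sum_j(p\zz B_j x)^2\ge0$ for every $p\in\R^n$, and this identity, specialised to various $x$ and $p$, is essentially the only deterministic tool needed.

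First I would record the Cauchy--Schwarz estimate. For $x,y\in S_k$, the identity $xx\zz-yy\zz=x(x-y)\zz+(x-y)y\zz$ gives $M(x)-M(y)=\tfrac{4}{m}\sum_j\big(B_jx\,(B_j(x-y))\zz+B_j(x-y)\,(B_jy)\zz\big)$. Pairing with unit vectors $p,q$, applying Cauchy--Schwarz to each of the two sums, and using $\sum_j(p\zz B_jv)^2=\tfrac{m}{4}p\zz M(v)p\le\tfrac{m}{4}\|M(v)\|$ together with $M(cv)=c^2M(v)$, I obtain
\[
\|M(x)-M(y)\|\ \le\ \|x-y\|\,\Big(\sqrt{\|M(x)\|}+\sqrt{\|M(y)\|}\Big)\sqrt{\|M(w_k)\|},
\]
where $w_k:=(z-x_k)/\|z-x_k\|$: since $S_k$ is a \emph{line segment}, every $x-y$ with $x,y\in S_k$ is a scalar multiple of $z-x_k$, so $(x-y)/\|x-y\|=\pm w_k$ and $M(-w_k)=M(w_k)$. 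This line-segment structure is exactly what replaces a uniform covering argument over $S_k$ by a statement about three fixed vectors. In the same way, writing $x=tz+(1-t)x_k$ splits $M(x)=t^2M(z)+(1-t)^2M(x_k)+t(1-t)\,C$ with $\|C\|\le2\sqrt{\|M(z)\|\,\|M(x_k)\|}$ by the same estimate, so $\|M(x)\|\le\big(t\sqrt{\|M(z)\|}+(1-t)\sqrt{\|M(x_k)\|}\big)^2\le\max\{\|M(z)\|,\|M(x_k)\|\}$ for every $x\in S_k$.

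It then suffices to bound $\|M(v)\|$ for the three \emph{fixed} vectors $v\in\{z,x_k,w_k\}$ (fixed, hence independent of the $a_j$, precisely because of the re-sampling in Algorithm~\ref{GN}). A short moment computation gives $\E\,M(v)=6vv\zz+2\|v\|^2 I$, so $\|\E\,M(v)\|=8\|v\|^2$. For the concentration I would use $B_jv=\mathrm{Re}\big((a_j^*v)a_j\big)$ to split each summand as
\[
(B_jv)(B_jv)\zz=\tfrac12\,\mathrm{Re}\big(|a_j^*v|^2a_ja_j^*\big)+\tfrac12\,\mathrm{Re}\big((a_j^*v)^2a_ja_j\zz\big),
\]
so that $M(v)=2\,\mathrm{Re}(S_v)+2\,\mathrm{Re}(T_v)$ with $S_v:=\tfrac1m\sum_j|a_j^*v|^2a_ja_j^*$ and $T_v:=\tfrac1m\sum_j(a_j^*v)^2a_ja_j\zz$. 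Lemma~\ref{WF_lemma} gives $\|S_v-\E S_v\|\le\tfrac{\delta}{4}\|v\|^2$ with probability at least $1-5\exp(-\gamma_\delta n)-4/n^2$, and an analogous sub-exponential bound handles the non-Hermitian sum $T_v$, giving $\|T_v-\E T_v\|\le\tfrac{\delta}{4}\|v\|^2$; hence $\|M(v)-\E M(v)\|\le\delta\|v\|^2$, i.e. $\|M(v)\|\le(8+\delta)\|v\|^2$. A union bound over the three vectors does not inflate the $O(1/n^2)$ term, since the heavy-tail event $\bigcup_j\{\|a_j\|^2>C_0n\}$ is common to all of them, so the failure probability stays of the stated form. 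Feeding $\|z\|=\|w_k\|=1$ and $\|x_k\|\le1+\sqrt\delta$ (so that $\|M(x)\|,\|M(y)\|\le(8+\delta)(1+\sqrt\delta)^2$ and $\|M(w_k)\|\le8+\delta$) into the Cauchy--Schwarz estimate yields $\|M(x)-M(y)\|\le2(8+\delta)(1+\sqrt\delta)\,\|x-y\|=8\big(2+\tfrac{\delta}{4}\big)(1+\sqrt\delta)\,\|x-y\|$, which is exactly $L_J$.

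The step I expect to be the main obstacle is the concentration of the non-Hermitian ``pseudo-covariance'' sum $T_v=\tfrac1m\sum_j(a_j^*v)^2a_ja_j\zz$: Lemma~\ref{WF_lemma} as quoted only controls the Hermitian sum $S_v$, so one must carry out the analogous estimate from scratch (truncating the heavy tail of $\|a_j\|$ and then applying a matrix Bernstein inequality, as in the proof of Lemma~7.4 of \cite{WF}). Everything else — the quadratic-form identity, the two Cauchy--Schwarz computations, and the bound $\|M(x)\|\le\max\{\|M(z)\|,\|M(x_k)\|\}$ on the segment — is routine once that estimate is in place.
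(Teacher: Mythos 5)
Your argument is correct and lands on exactly the paper's constant $L_J=8(2+\tfrac{\delta}{4})(1+\sqrt{\delta})$, but by a genuinely different route. The paper uses rotational invariance to place $z=e_1$ and $x_k\in\mathrm{span}\{e_1,e_2\}$, factors $(a_{jR}\zz x)^2-(a_{jR}\zz y)^2$ into $(x+y)$- and $(x-y)$-parts, and then absorbs the whole difference into a single weighted sample-covariance matrix with weights $\kappa_1^2+\kappa_2^2$ (the squared projections of $a_{jR},a_{jI}$ onto the $e_1$--$e_2$ plane), controlled once by Lemma~\ref{WF_lemma}; the bound is $4\|x+y\|\,\|S\|\leq 8(1+\sqrt{\delta})(2+\tfrac{\delta}{4})$. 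You instead keep the coordinate-free representation $M(x)=\tfrac{4}{m}\sum_j B_jxx\zz B_j$, polarize, and reduce everything by Cauchy--Schwarz to operator-norm bounds on $M$ at the three fixed vectors $z$, $x_k$ and $(z-x_k)/\|z-x_k\|$; your $(8+\delta)/4$ plays precisely the role of the paper's $2+\tfrac{\delta}{4}$. Your route avoids the change of basis and uses the segment structure only through ``$x-y$ is parallel to $z-x_k$'', at the price of the one extra concentration estimate you flag, namely the non-Hermitian sum $T_v=\tfrac1m\sum_j(a_j^*v)^2a_ja_j\zz$, which Lemma~\ref{WF_lemma} does not cover verbatim. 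You could in fact sidestep $T_v$ entirely: in Lemma~\ref{largest_eigen_jj} the paper applies Lemma~\ref{WF_lemma} directly to the real matrix $\tfrac{4}{m}\sum_j(B_jv)(B_jv)\zz=M(v)$ for a fixed $v$, and invoking the same estimate for your three vectors closes your argument with no new concentration work and is no looser than the paper's own use of that lemma. The only remaining quibble is cosmetic: a union bound over three fixed vectors multiplies the $4/n^2$ failure term by a small constant, harmless for the eventual $1-c/n^2$ statement in Theorem~\ref{th:maintheorem} even if not literally the probability printed in the lemma.
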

\begin{IEEEproof}
	Since the measurement vectors $ a_j $, $ j=1,\ldots,m $ are rotationally invariant and independent with $ x_k $ and $ z $, wlog, we can assume that $ z = e_1 $ and $ x_k = \|x_k\|(\alpha e_1 +\sqrt{1-\alpha^2}e_2) $, where $ \alpha = \langle x_k, z\rangle/\|x_k\| $. As $ \|x_k-z\|\leq\sqrt{\delta} $, so $ \langle x_k, z\rangle\geq 0 $, i.e., $ \alpha\geq 0 $.
	We can write $x,y\in S_k$ in the form of
	\begin{equation*}
	\begin{cases}
	x = t_1 x_k + (1-t_1)z, \,\, t_1\in[0,1],\\
	y=  t_2 x_k + (1-t_2)z,\,\, t_2\in[0,1].\\
	\end{cases}
	\end{equation*}
	For any $ x,y\in S_k $,
\begin{small}
\begin{align}\label{Lip}
&\|J(x)\zz J(x)-J(y)\zz J(y) \|\\\nonumber
&= 4\Bigg\|\frac{1}{m}\sum_{j=1}^{m}\left((a_{jR}\zz x)^2-(a_{jR}\zz y)^2\right)a_{jR}a_{jR}\zz\\\nonumber
&\quad +\frac{1}{m}\sum_{j=1}^{m}\left((a_{jI}\zz x)^2-(a_{jI}\zz y)^2\right)a_{jI}a_{jI}\zz+\frac{1}{m}\sum_{j=1}^{m}\\\nonumber
&\quad\left((a_{jR}\zz x)(a_{jI}\zz x)-(a_{jR}\zz y)(a_{jI}\zz y)\right)(a_{jR}a_{jI}\zz+a_{jI}a_{jR}\zz)\Bigg\|\\\nonumber
&=2\Bigg\|\frac{1}{m}\sum_{j=1}^{m}\begin{bmatrix}
\sigma_{R,-}I_n, &\sigma_{I,-}I_n
\end{bmatrix}\begin{bmatrix}
a_{jR}\\ a_{jI}
\end{bmatrix}\begin{bmatrix}
a_{jR}\zz, & a_{jI}\zz
\end{bmatrix}\begin{bmatrix}
\sigma_{R,+}I_n\\ \sigma_{I,+}I_n
\end{bmatrix}\\\nonumber
&+\frac{1}{m}\sum_{j=1}^{m}\begin{bmatrix}
\sigma_{R,+}I_n,& \sigma_{I,+}I_n
\end{bmatrix}\begin{bmatrix}
a_{jR}\\ a_{jI}
\end{bmatrix}\begin{bmatrix}
a_{jR}\zz, & a_{jI}\zz
\end{bmatrix}\begin{bmatrix}
\sigma_{R,-}I_n\\ \sigma_{I,-}I_n
\end{bmatrix}\Bigg\|\\\nonumber
&\leq4 \|x+y\|\|x-y\|\\\nonumber
&\quad\left\|\frac{1}{m}\sum_{j=1}^{m}\begin{bmatrix}
\kappa_1 I_n, & \kappa_2 I_n
\end{bmatrix}\begin{bmatrix}
a_{jR}\\ a_{jI}
\end{bmatrix}\begin{bmatrix}
a_{jR}\zz, & a_{jI}\zz
\end{bmatrix}\begin{bmatrix}
\kappa_1 I_n\\ \kappa_2 I_n
\end{bmatrix}\right\|,
\end{align}
\end{small}
where $ \sigma_{R,+}:=a_{jR}\zz(x+y) $, $ \sigma_{R,-}:=a_{jR}\zz(x-y) $, $ \sigma_{I,+}:=a_{jI}\zz(x+y) $, $ \sigma_{I,-}:=a_{jI}\zz(x-y) $, $ \kappa_1 := \sqrt{(a_{jR}\zz e_1)^2+(a_{jR}\zz e_2)^2} $ and $ \kappa_2 := \sqrt{(a_{jI}\zz e_1)^2+(a_{jI}\zz e_2)^2} $ and the last inequality is obtained by Cauchy-Schwarz inequality.
	Next we set
	\begin{small}
		\[
		S:=\frac{1}{m}\sum_{j=1}^{m}\begin{bmatrix}
		\kappa_1 I_n, & \kappa_2 I_n
		\end{bmatrix}\begin{bmatrix}
		a_{jR}\\ a_{jI}
		\end{bmatrix}\begin{bmatrix}
		a_{jR}\zz, & a_{jI}\zz
		\end{bmatrix}\begin{bmatrix}
		\kappa_1 I_n\\ \kappa_2 I_n
		\end{bmatrix}
		\]
	\end{small}	
    By calculation, we have $ \E (S) =  I_n + e_1e_1\zz + e_2e_2\zz  $.
	According to Lemma \ref{WF_lemma}, for $ 0<\delta\leq 1/93 $ and $ m\geq Cn\log n $,
	\[
	\|S-\E(S)\|\leq\frac{\delta}{4}
	\]
	holds with probability at least $ 1-5\exp(-\gamma_\delta n)-4/n^2 $.
	So
	\begin{equation}\label{S}
	\|S\| \leq 2+\frac{\delta}{4}.
	\end{equation}
	On the other hand,
	as $ \|x_k - z\|\leq\sqrt{\delta} $, we have
	\begin{equation}\label{magnitude_xk}
	1-\sqrt{\delta}\leq\|x_k\|\leq 1+\sqrt{\delta}.
	\end{equation}
	Thus
	\begin{align}\label{x+y}
	\quad \,\,\|x+y\|
	&=\|(\lambda_1+\lambda_2)x_k + (2-\lambda_1-\lambda_2)z\|\\\nonumber
	&\leq (\lambda_1+\lambda_2)\|x_k\|+(2-\lambda_1-\lambda_2)\\\nonumber
	&\leq 2(1+\sqrt{\delta}).
	\end{align}
	Putting (\ref{S}) and (\ref{x+y}) into (\ref{Lip}), we obtain
	\begin{small}
		\[
		\|J(x)\zz J(x)-J(y)\zz J(y)\|\leq 8(2+\frac{\delta}{4})(1+\sqrt{\delta})\|x-y\|.
		\]
	\end{small}
	So we conclude that when $ m\geq Cn\log n $, $ J(x)\zz J(x) $ is Lipschitz continuous on the line $ S_k $
	with constant $ L_J = 8(2+\frac{\delta}{4})(1+\sqrt{\delta}) $
	with probability at least $ 1-5\exp(-\gamma_\delta n)-4/n^2 $.
\end{IEEEproof}

\begin{corollary}\label{H(x)}
	Under the same conditions as in Lemma \ref{J(x)},
	\begin{small}
		\[
		H(x)=\frac{2}{m}\sum_{j=1}^{m}\left((a_{jR}\zz x)^2 +(a_{jI}\zz x)^2-y_j\right)(a_{jR}a_{jR}\zz+a_{jI}a_{jI}\zz)
		\]
	\end{small}	
	is Lipschitz continuous on $ S_k $ with Lipschitz constant
	\[
	L_H=4(1+\sqrt{\delta})(3+\frac{\delta}{4}),
	\]
	with  probability at least $ 1-5\exp(-\gamma_\delta n)-4/n^2 $.
\end{corollary}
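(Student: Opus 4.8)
The plan is to run through the proof of Lemma \ref{J(x)} almost verbatim; the only substantive changes are the weight carried by the rank-two outer products and the value of one expectation. First I would rewrite $H$ in a cleaner form. Setting $M_j := a_{jR}a_{jR}\zz + a_{jI}a_{jI}\zz$, equation \eqref{H_definition} reads $H(x) = \frac{2}{m}\sum_{j=1}^m \big(x\zz M_j x - y_j\big)M_j$, so that the piece $-\frac{2}{m}\sum_j y_j M_j$ is independent of $x$ and cancels in any difference, while the polarization identity $x\zz M_j x - y\zz M_j y = (x+y)\zz M_j(x-y)$ (valid since $M_j$ is symmetric) gives, for $x,y\in S_k$,
\[
H(x)-H(y) = \frac{2}{m}\sum_{j=1}^m \big[(x+y)\zz M_j(x-y)\big]\, M_j .
\]

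Next, exactly as in Lemma \ref{J(x)}, I would use the rotational invariance of the Gaussian measurements to assume $z = e_1$ and $x_k = \|x_k\|(\alpha e_1 + \sqrt{1-\alpha^2}\,e_2)$, so that every $x \in S_k$, and hence $x+y$ and $x-y$, lies in $\mathrm{span}\{e_1,e_2\}$. Since $H(x)-H(y)$ is real symmetric, it suffices to estimate $p\zz(H(x)-H(y))p$ over unit vectors $p$. Writing $(x+y)\zz M_j(x-y) = (a_{jR}\zz(x+y))(a_{jR}\zz(x-y)) + (a_{jI}\zz(x+y))(a_{jI}\zz(x-y))$, a Cauchy--Schwarz step in $\R^2$ together with the elementary bounds $|a_{jR}\zz v|\le\kappa_1\|v\|$ and $|a_{jI}\zz v|\le\kappa_2\|v\|$ for $v\in\mathrm{span}\{e_1,e_2\}$ — with $\kappa_1 := \sqrt{(a_{jR}\zz e_1)^2+(a_{jR}\zz e_2)^2}$ and $\kappa_2 := \sqrt{(a_{jI}\zz e_1)^2+(a_{jI}\zz e_2)^2}$ as in Lemma \ref{J(x)} — yields $|(x+y)\zz M_j(x-y)| \le (\kappa_1^2+\kappa_2^2)\|x+y\|\,\|x-y\|$. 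Since $p\zz M_j p = (a_{jR}\zz p)^2 + (a_{jI}\zz p)^2 \ge 0$, this leads to
\[
\|H(x)-H(y)\| \;\le\; 2\,\|x+y\|\,\|x-y\|\,\Big\|\tfrac{1}{m}\sum_{j=1}^m (\kappa_1^2+\kappa_2^2)M_j\Big\| .
\]

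Finally I would control $\|T\|$ where $T := \tfrac1m\sum_j(\kappa_1^2+\kappa_2^2)M_j$. A short moment computation gives $\E(T) = 2I_n + e_1e_1\zz + e_2e_2\zz$, so $\|\E(T)\| = 3$; and since $T$ is a sum of an $O(1)$ number of terms each of the form $\tfrac1m\sum_j (a_{jR}\zz e_k)^2 a_{jR}a_{jR}\zz$ (or analogous cross terms carrying an independent scalar factor), Lemma \ref{WF_lemma} applies to each term and gives $\|T - \E(T)\|\le\delta/4$, hence $\|T\|\le 3+\delta/4$, on an event of probability at least $1-5\exp(-\gamma_\delta n)-4/n^2$ provided $m\ge Cn\log n$. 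Combining this with $\|x+y\|\le 2(1+\sqrt{\delta})$ from \eqref{x+y} (which uses $\|x_k\|\le 1+\sqrt{\delta}$ from \eqref{magnitude_xk}) gives $\|H(x)-H(y)\|\le 4(1+\sqrt{\delta})(3+\tfrac{\delta}{4})\|x-y\|$ on $S_k$, i.e.\ the claimed Lipschitz constant $L_H = 4(1+\sqrt{\delta})(3+\tfrac{\delta}{4})$. I expect the only non-routine points to be the polarization/Cauchy--Schwarz manipulation that cleanly isolates the scalar weight $\kappa_1^2+\kappa_2^2$ and pinning down the constant $\|\E(T)\|=3$; the concentration input is exactly the one already invoked for Lemma \ref{J(x)}, so it should carry over without new difficulty.
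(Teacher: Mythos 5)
Your proposal is correct and follows essentially the same route as the paper: the same polarization identity $x\zz M_jx-y\zz M_jy=(x+y)\zz M_j(x-y)$, the same reduction via rotational invariance to ${\rm span}\{e_1,e_2\}$, the same Cauchy--Schwarz bound producing the weight $\kappa_1^2+\kappa_2^2$ and the matrix $S$ with $\E(S)=2I_n+e_1e_1\zz+e_2e_2\zz$, the same appeal to Lemma \ref{WF_lemma} for $\|S\|\leq 3+\delta/4$, and the same use of \eqref{x+y}. Your explicit remarks on the positive semidefiniteness of $M_j$ and on decomposing $S$ before invoking the concentration lemma only make the paper's implicit steps more careful.
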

\begin{IEEEproof}
	For any $ x,y\in S_k$, we have
	\begin{small}
	\begin{align}\label{H_ineq}
	&\|H(x)-H(y)\| \\\nonumber
	&=\bigg\| \frac{2}{m}\sum_{j=1}^{m}\left((a_{jR}\zz x)^2-(a_{jR}\zz y)^2 +(a_{jI}\zz x)^2-(a_{jI}\zz y)^2\right)\cdot\\\nonumber
	&\quad\quad\quad\quad\quad (a_{jR}a_{jR}\zz+a_{jI}a_{jI}\zz)\bigg\|\\\nonumber
	&=\bigg\|\frac{2}{m}\sum_{j=1}^{m}\left(a_{jR}\zz (x+y)\cdot a_{jR}\zz (x-y)+a_{jI}\zz (x+y)\cdot a_{jI}\zz (x-y)\right)\cdot\\\nonumber
	&\quad\quad\quad\quad\quad
	(a_{jR}a_{jR}\zz+a_{jI}a_{jI}\zz)\bigg\|\\\nonumber
	&\leq 2\|x+y\|\|x-y\|\\\nonumber
	 &\quad\quad\bigg\|\frac{1}{m}\sum_{j=1}^{m}\left((a_{jR}\zz e_1)^2+(a_{jR}\zz e_2)^2+(a_{jI}\zz e_1)^2+(a_{jI}\zz e_2)^2\right)\cdot\\\nonumber
   &\quad\quad\quad\quad\quad(a_{jR}a_{jR}\zz+a_{jI}a_{jI}\zz)
	\bigg\|.
	\end{align}
	\end{small}	
	We set
	\begin{small}
	\begin{align*}
	S:=\frac{1}{m}\sum_{j=1}^{m}\Big((a_{jR}\zz e_1)^2+&(a_{jR}\zz e_2)^2+(a_{jI}\zz e_1)^2+(a_{jI}\zz e_2)^2\Big)\cdot\\
	&(a_{jR}a_{jR}\zz+a_{jI}a_{jI}\zz).
	\end{align*}
	\end{small}	
	By calculation, we have
	$$
	\E (S)=2I_n+e_1e_1\zz+e_2e_2\zz.
	$$
	So according to Lemma \ref{WF_lemma}, for $ 0<\delta\leq 1/93 $ and $ m\geq Cn\log n $,
	\[
	\|S-\E(S)\|\leq\frac{\delta}{4}
	\]
	holds with probability at least $ 1-5\exp(-\gamma_\delta n)-4/n^2 $. So we have
	\begin{equation}\label{H_exp}
	\|S\|\leq 3+\frac{\delta}{4}.
	\end{equation}	
	Putting (\ref{H_exp}) and (\ref{x+y}) into  (\ref{H_ineq}), we have
	\[
	\|H(x)-H(y)\|\leq 4(1+\sqrt{\delta})(3+\frac{\delta}{4})\|x-y\|.
	\]
	So $ H(x) $ is Lipschitz continuous on $ S_k $ with constant $ L_H=4(1+\sqrt{\delta})(3+\frac{\delta}{4}) $.
\end{IEEEproof}
Next we present an estimation of the largest eigenvalue of $ (J(x_k)\zz J(x_k))^{-1} $.
\begin{lemma}\label{largest_eigen_jj}
	Suppose that $ \|x_k-z\|\leq\sqrt{\delta} $, where $x_k, z\in \R^n$ with $\|z\|=1$
	and $ 0<\delta\leq 1/93 $.  Suppose that $ a_j\in\C^n $, $ j=1,\ldots,m $ are Gaussian random measurements which are independent with $ x_k $. If $ m\geq Cn\log n $ for a sufficiently large constant $ C $, then with probability at least $ 1-5\exp(-\gamma_\delta n)-4/n^2 $, we have
	$J(x_k)\zz J(x_k)$ is  invertible and
	\begin{small}
	\[
	\|( J(x_k)\zz J(x_k))^{-1}\|\leq\frac{4}{(16-\delta)(1-\sqrt{\delta})^2}.
	\]
	\end{small}	
\end{lemma}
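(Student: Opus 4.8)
The plan is to convert the operator-norm bound on the inverse into a lower bound on the smallest eigenvalue of $J(x_k)\zz J(x_k)$, and to obtain that lower bound by comparing $J(x_k)\zz J(x_k)$ with its expectation. Since $J(x_k)\zz J(x_k)$ is symmetric and positive semidefinite, once it is invertible we have $\|(J(x_k)\zz J(x_k))^{-1}\| = 1/\lambda_{\min}(J(x_k)\zz J(x_k))$; hence it suffices to show
\[
\lambda_{\min}(J(x_k)\zz J(x_k)) \ge \frac{(16-\delta)(1-\sqrt\delta)^2}{4}
\]
on the stated event, which simultaneously yields invertibility and the claimed norm bound.

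First I would compute the mean matrix. Writing $b_j := a_{jR}a_{jR}\zz x_k + a_{jI}a_{jI}\zz x_k$ so that $J(x_k)\zz J(x_k) = \frac{4}{m}\sum_{j=1}^m b_j b_j\zz$ by \ref{expressionJJ}, a Gaussian fourth-moment calculation for the i.i.d. measurements gives
\[
\E\big(J(x_k)\zz J(x_k)\big) = 4\|x_k\|^2 I_n + R,
\]
where $R\succeq 0$ is a rank-one matrix supported on $\text{span}\{x_k\}$; consequently $\lambda_{\min}(\E(J(x_k)\zz J(x_k))) = 4\|x_k\|^2$, attained on $x_k^\perp$. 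As in Lemma \ref{J(x)}, I would exploit the rotational invariance of the $a_j$ and their independence from $x_k$ to assume $z = e_1$ and $x_k \in \text{span}\{e_1, e_2\}$, which makes the moment computation and the two relevant coordinate directions explicit.

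Next I would bound the fluctuation $\|J(x_k)\zz J(x_k) - \E(J(x_k)\zz J(x_k))\|$. After the above reduction, this deviation decomposes into finitely many sums of the Wirtinger-flow type $\frac{1}{m}\sum_j |a_j^* w|^2 a_j a_j^*$ along the directions $e_1, e_2$, exactly as in the proofs of Lemma \ref{J(x)} and Corollary \ref{H(x)}. Applying Lemma \ref{WF_lemma} with the given $\delta$ therefore yields
\[
\big\|J(x_k)\zz J(x_k) - \E(J(x_k)\zz J(x_k))\big\| \le \frac{\delta}{4}\|x_k\|^2
\]
with probability at least $1 - 5\exp(-\gamma_\delta n) - 4/n^2$, provided $m \ge Cn\log n$. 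Weyl's theorem then gives $\lambda_{\min}(J(x_k)\zz J(x_k)) \ge (4 - \tfrac{\delta}{4})\|x_k\|^2$, and since $\|x_k - z\| \le \sqrt\delta$ with $\|z\|=1$ forces $\|x_k\| \ge 1-\sqrt\delta$ (cf. \ref{magnitude_xk}), I obtain
\[
\lambda_{\min}(J(x_k)\zz J(x_k)) \ge \Big(4 - \frac{\delta}{4}\Big)(1-\sqrt\delta)^2 = \frac{(16-\delta)(1-\sqrt\delta)^2}{4},
\]
which inverts to the asserted bound on the same event.

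The main obstacle is the fourth-moment computation that fixes the leading constant $\lambda_{\min}(\E(J(x_k)\zz J(x_k)))$: the entire denominator $(16-\delta)$ is inherited from this value, so the expectation of $\frac{4}{m}\sum_j b_j b_j\zz$ for the complex Gaussian model—in particular the isotropic contribution of the terms $(a_{jR}\zz x_k)^2 a_{jR}a_{jR}\zz$ and $(a_{jI}\zz x_k)^2 a_{jI}a_{jI}\zz$ together with the coupling between $a_{jR}$ and $a_{jI}$—must be tracked carefully, since the constant is quite sensitive to the variance normalization of the real and imaginary parts. Once the correct mean matrix and its spectral gap are in hand, the remaining steps (the rotation reduction, the concentration estimate from Lemma \ref{WF_lemma}, and Weyl's inequality) are routine.
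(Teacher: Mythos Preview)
Your overall strategy matches the paper's exactly: set $S=J(x_k)\zz J(x_k)$, compute $\E(S)$, bound $\|S-\E(S)\|$ via Lemma~\ref{WF_lemma}, apply Weyl's theorem, and then use $\|x_k\|\ge 1-\sqrt{\delta}$ from~(\ref{magnitude_xk}) to convert to the final bound. The paper does not perform the rotational reduction to $\mathrm{span}\{e_1,e_2\}$ here; it simply applies Lemma~\ref{WF_lemma} directly to $S$.

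There is, however, a genuine error in your fourth-moment computation, precisely at the point you flagged as the main obstacle. With the paper's normalization $a_{jR},a_{jI}\sim\mathcal{N}(0,I_n/2)$ independent, Isserlis' theorem gives, for $u\sim\mathcal{N}(0,I/2)$,
\[
\E\big[(u\zz x_k)^2\,uu\zz\big]=\tfrac{1}{4}\|x_k\|^2 I_n+\tfrac{1}{2}x_kx_k\zz,
\qquad
\E\big[(u\zz x_k)(v\zz x_k)(uv\zz+vu\zz)\big]=\tfrac{1}{2}x_kx_k\zz,
\]
so the bracket in~(\ref{expressionJJ}) has expectation $\tfrac{1}{2}\|x_k\|^2 I_n+\tfrac{3}{2}x_kx_k\zz$ and hence
\[
\E\big(J(x_k)\zz J(x_k)\big)=2\|x_k\|^2 I_n+6\,x_kx_k\zz,\qquad \lambda_{\min}(\E S)=2\|x_k\|^2,
\]
not $4\|x_k\|^2$. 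Feeding this through Weyl and~(\ref{magnitude_xk}) yields
\[
\lambda_{\min}(S)\ge\Big(2-\tfrac{\delta}{4}\Big)(1-\sqrt{\delta})^2,\qquad
\|(J(x_k)\zz J(x_k))^{-1}\|\le\frac{4}{(8-\delta)(1-\sqrt{\delta})^2},
\]
which is exactly what the paper's proof concludes in~(\ref{jj-1}) and what is used in~(\ref{eq:conc}) and in the definition~(\ref{eq:beta}) of $\beta$. The constant $(16-\delta)$ in the lemma \emph{statement} is a typo; you appear to have back-solved the isotropic part of $\E(S)$ from that typo, and the resulting bound is too strong by a factor of roughly $2$.
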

\begin{IEEEproof}
	We know
\begin{small}
		\begin{align*}
		&J(x_k)\zz J(x_k)\\
		&=\frac{4}{m}\sum_{j=1}^{m}\Big((a_{jR}\zz x_k)^2a_{jR}a_{jR}\zz +(a_{jI}\zz x_k)^2a_{jI}a_{jI}\zz\\
		&\quad \quad +(a_{jR}\zz x_k)(a_{jI}\zz x_k)(a_{jI}a_{jR}\zz+a_{jR}a_{jI}\zz)\Big)\\
		&=\frac{4}{m}\sum_{j=1}^{m}\begin{bmatrix}
		(a_{jR}\zz x_k)I_n, & (a_{jI}\zz x_k)I_n
		\end{bmatrix}\begin{bmatrix}
		a_{jR} \\ a_{jI}
		\end{bmatrix}\begin{bmatrix}
		a_{jR}\zz,& a_{jI}\zz
		\end{bmatrix}\begin{bmatrix}
		(a_{jR}\zz x_k)I_n \\ (a_{jI}\zz x_k)I_n
		\end{bmatrix}.
		\end{align*}
	\end{small}
	Set
	$$
	S:= J(x_k)\zz J(x_k).
	$$
	After a simple  calculation, we obtain
	\[
	\E (S) = 2\|x_k\|^2I_n+6x_kx_k\zz
	\]
	and the minimum eigenvalue of $ \E S $ is
	\[
	\lambda_{\min}\big(\E (S)\big)=2\|x_k\|^2.
	\]
	According to Lemma \ref{WF_lemma}, for $ 0< \delta\leq 1/93 $ and $ m\geq C n\log n $,
	\[
	\|S-\E (S)\|\leq \frac{\delta}{4}\|x_k\|^2
	\]
	holds with probability at least $ 1-5\exp(-\gamma_\delta n)-4/n^2 $.
	Then according to the Wely Theorem, we have
	\begin{align*}
	|\lambda_{\min}(S)-\lambda_{\min}\big(\E
	(S)\big)|\leq\|S- \E
	(S)\|\leq\frac{\delta}{4}\|x_k\|^2,
	\end{align*}
	which implies that
	\begin{small}
		\begin{align*}
		\lambda_{\min}(S)&\geq (2-\frac{\delta}{4})\|x_k\|^2\\
		&\geq( 2-\frac{\delta}{4})(1-\sqrt{\delta})^2.
		\end{align*}
	\end{small}	
	Here, we use (\ref{magnitude_xk}) in the last inequality.
	Then with probability at least $ 1-5\exp(-\gamma_\delta n)-4/n^2 $, we have
	\begin{small}
	\begin{align}\label{jj-1}
	\lambda_{\max} (S^{-1}) = 1/\lambda_{\min}(S)\leq\frac{4}{(8-\delta)(1-\sqrt{\delta})^2},
	\end{align}
	\end{small}
	which implies the conclusion.
\end{IEEEproof}
We next present the proof of  Theorem \ref{th:maintheorem}.
\begin{IEEEproof}[\bf Proof of Theorem \ref{th:maintheorem}]
	Without loss of generality, we suppose $ \langle x_k,z\rangle\geq 0$, i.e.,
	\[
	\text{dist}(x_k, z)=\|x_k-z\|.
	\]
	Then we just need to prove when
	$
	\|x_k-z\|\leq\sqrt{\delta}
	$
	and $  m\geq Cn\log n $,
		\begin{align*}
			\text{dist}(x_{k+1}, z)
			= \|x_{k+1}-z\|
			\leq \beta\cdot\|x_k-z\|^2
			=\beta\cdot\text{dist}^2(x_k, z)
		\end{align*}
	holds with probability at least $ 1-c/n^2 $.
	
	As $ z  $ is an exact solution to (\ref{eq:mreal}),  we have $ \nabla f(z)=H(z)=0 $.
	The definition of $x_{k+1}$ shows that
	
	\begin{align}\label{eq:1}
  x_{k+1} -  z  &= x_k -z - \big(J(x_k)\zz J(x_k)\big)^{-1} \nabla f(x_k) \\\nonumber
 &= \big(J(x_k)\zz J(x_k)\big)^{-1}\cdot \\\nonumber
 &\left[
	\big(J(x_k)\zz J(x_k)\big) \cdot (x_k - z)
	- \big(\nabla f(x_k) - \nabla f(z)\big)
	\right].\nonumber
	\end{align}
	Define $S_k:=\{x_k+t(z-x_k): 0\leq t\leq 1\}$ and $ x(t)=x_k+t(z-x_k) $.
	Then we have
		\begin{align}\label{eq:2}
		\nabla f(x_k) - \nabla f(z)  &= \nabla f\big(x(0)\big)-\nabla f\big(x(1)\big)\\\nonumber
		& = -\int_{0}^{1}\frac{\dd\big(\nabla f\big(x(t)\big)\big)}{\dd t}\dd t\\\nonumber
		&= -\int_{0}^{1}\nabla^2 f(x(t))\cdot x'(t)\dd t\\\nonumber
		& = -\frac{1}{\|x_k-z\|}\int_{S_k} \nabla^2 f(x)\cdot(z-x_k) \dd s.
		\end{align}
	The integral in (\ref{eq:2}) is interpreted as element-wise.
	Combining (\ref{H_definition}) and $H(z)=0$, we obtain
	\begin{small}
		\begin{equation}\label{eq:tuidao}
		\begin{aligned}
		& \|x_k-z\|\cdot \left\| \big(J(x_k)\zz J(x_k)\big) \cdot (x_k -  z)  - \big(\nabla f(x_k) - \nabla f(z)\big) \right\| \\
		&=\left\|\int_{S_k}
		\big(J(x_k)\zz J(x_k) \cdot (x_k-z)-\nabla^2 f(x)\cdot (x_k-z)\big) \dd s
		\right\| \\
		&   =\left\|\int_{S_k}
		\big(J(x_k)\zz J(x_k)-J(x)^\top J(x) - H(x) \big)\cdot (x_k-z) \dd s
		\right\| \\
		& \leq
		\left\|\int_{S_k}
		\left(J(x_k)\zz J(x_k) - J(x)\zz J(x)\right)
		\cdot (x_k-z) \dd s\right\|\\
		&
		\quad + \left\|\int_{S_k}
		\big(H(x)-H(z)\big)\cdot (x_k- z) \dd s
		\right\|.
		\end{aligned}
		\end{equation}
	\end{small}
	
	According to Lemma \ref{J(x)} and Corollary \ref{H(x)},  $ J(x)\zz  J(x) $ and $ H(x) $ are Lipschitz continuous on the line
	$ S_k $ with probability at least $ 1-5\exp(-\gamma_\delta n)-4/n^2 $
	provided $ m\geq Cn\log n $. So using (\ref{eq:tuidao}), we obtain
\begin{small}
\begin{equation*}
\begin{aligned}
&\left\| (J(x_k)\zz J(x_k)) \cdot (x_k - z)
- (\nabla f(x_k) - \nabla f(z)) \right\| \\
& \leq \frac{1}{\|x_k-z\|}\cdot\\
&\quad\bigg(\left\|\int_{S_k}
\left(J(x_k)\zz J(x_k) - J(x)\zz J(x)\right)
\cdot (x_k-z) \dd s\right\|\\
&\quad + \left\|\int_{S_k}
		\big(H(x)-H(z)\big)\cdot (x_k- z) \dd s
		\right\|\bigg)\\
		& \leq
		\int_{S_k}
		\left\|J(x_k)\zz J(x_k) - J(x)\zz J(x)\right\|  \dd s\\
		&\quad\quad \quad + \int_{S_k}\|H(x)-H(z)\| \dd s
		\\
		& \leq
		\int_{S_k}
		L_J \|x_k-x\|  \dd s
		+ \int_{S_k}L_H\|x-z\| \dd s
		\\
		& = \frac{L_J+L_H}{2} \cdot \left\|x_k-z\right\|^2\\
		& = 2(7+\frac{3\delta}{4})(1+\sqrt{\delta})\left\|x_k-z\right\|^2.
		\end{aligned}
		\end{equation*}
	\end{small}
	
	Thus according to Lemma \ref{largest_eigen_jj} and (\ref{eq:1}), when $ m\geq Cn\log n $,
	\begin{equation}\label{eq:conc}
	\begin{aligned}
	&\|x_{k+1} -  z \| \\
	&=\|\big(J(x_k)\zz J(x_k)\big)^{-1}\|\cdot \\
	&\quad \quad \|
	\big(J(x_k)\zz J(x_k)\big) \cdot (x_k - z)
	- \big(\nabla f(x_k) - \nabla f(z)\big)
	\|\\
	&\leq\frac{4}{(8-\delta)(1-\sqrt{\delta})^2}\cdot 2(7+\frac{3\delta}{4})(1+\sqrt{\delta})\left\|x_k-z\right\|^2\\
	&= \beta\cdot\left\|x_k-z\right\|^2
	\end{aligned}
	\end{equation}
	holds with probability at least $ 1-c/n^2 $.
	Based on the discussion in Theorem \ref{initial1_1}, we have
	\[
	\|x_{k+1} -  z \|\leq \beta\cdot\left\|x_k-z\right\|^2\leq\sqrt{\delta}.
	\]
	Then we have $ \langle x_{k+1},z \rangle\geq 0 $, i.e., $ \text{dist}(x_{k+1}, z)=\|x_{k+1}-z\| $.
\end{IEEEproof}
\vspace{0.2cm}
{\bf Acknowledgements.} We are  grateful to Xin Liu for discussions and comments at the beginning  of this project, which contributed to the proof of Theorem \ref{th:maintheorem}.
We would like to thank the referees  for thorough and useful comments which
have helped to improve the presentation of the paper.
\vspace{0.2cm}

\bibliographystyle{IEEEtran}
\bibliography{ref}
\end{document}